\newtheorem{theorem}{\bf Theorem}[section]
\newtheorem{lemma}{\bf Lemma}[section]
\newtheorem{proposition}{\bf Proposition}[section]
\title{Riemannian Geometry and Molecular Surfaces I: Spectrum of the Laplacian}
\author{Rachael Pirie \\
    School of Natural and Environmental Sciences\\
    Newcastle University\\
	\texttt{r.pirie2@ncl.ac.uk}\\
	\And
	{Stuart J. Hall} \\
	School of Mathematics, Statistics, and Physics\\
	Newcastle  University\\
	\texttt{stuart.hall@ncl.ac.uk} \\
	\AND
	{Daniel J. Cole}\\
School of Natural and Environmental Sciences\\
Newcastle University\\
	\texttt{daniel.cole@ncl.ac.uk}
}
\begin{document}
\maketitle
\noindent
\begin{abstract}

{Ligand-based virtual screening aims to reduce the cost and duration of drug discovery campaigns. Shape similarity can be used to screen large databases, with the goal of predicting potential new hits by comparing to molecules with known favourable properties. This paper presents the theory underpinning RGMolSA, a new alignment-free and mesh-free surface-based molecular shape descriptor derived from the mathematical theory of Riemannian geometry. The treatment of a molecule as a series of intersecting spheres allows the description of its surface geometry using the \textit{Riemannian metric}, obtained by considering the spectrum of the Laplacian. This gives a simple vector descriptor constructed of the weighted surface area and eight non-zero eigenvalues, which capture the surface shape. We demonstrate the potential of our method by considering a series of PDE5 inhibitors that are known to have similar shape as an initial test case. RGMolSA displays promise when compared to existing shape descriptors and in its capability to handle different molecular conformers. The code and data used to produce the results are available via GitHub: \url{https://github.com/RPirie96/RGMolSA}.}

\end{abstract}

\keywords{Riemannian Geometry \and Molecular Shape \and Ligand-Based Virtual Screening}

\section{Introduction}

The chemical space containing drug-like molecules is vast, with an estimated size of $10^{60}$ molecules \cite{Reymond_2010}. Even with technological advances, it is slow and expensive to screen more than a tiny section of this space experimentally. To address this, chemists have increasingly complemented experimental studies with virtual screening to identify new small molecules (ligands) that might bind to a target, such as a protein, with therapeutic benefit~\cite{Leelananda_Lindert_2016}. Methods that make use of three-dimensional molecular shape have gained traction in recent years due to the importance of shape complementarity between the protein and ligand for strong binding \cite{Kumar_Zhang_2018}. Molecules with known activity can be used to screen large databases to identify other molecules, with similar shapes, which are likely to bind to the same protein targets \cite{Johnson_Maggiora_1990}.\\
\\
Shape-based methods are best known for their ability to identify molecules with the same global shape but that are chemically different from the known active; this phenomenon is known as scaffold hopping. Scaffold hopping can be used to improve drug performance as well as to address unwanted properties, or to generate new intellectual property. An important example of scaffold hopping is provided by the follow-up drugs to Sildenafil, a phosphodiesterase 5 (PDE5) inhibitor used to treat erectile dysfunction~\cite{Cleves_Jain_2008}. The chemical structure of Tadalafil is quite different from that of Sildenafil and Vardenafil (which is a classic ``me-too" drug, where only minor modifications have been made to the original), but the molecule occupies a similar volume in the binding pocket, implying that it has a similar shape (Figure \ref{fig:scafhop}). Sildenafil to Tadalafil is an example of scaffold hopping, where the difference in chemical structure gives a marked improvement in therapeutic profile compared to the other two drugs \cite{Rashid_2005}.\\
\\
Giving a meaningful quantitative measure of how similar two three-dimensional shapes are is an interesting problem in its own right: there is no fixed notion of three-dimensional shape.  Most shape-based virtual screening methods involve using the data describing a molecule to create a mathematical proxy in some latent space and then comparing these `shape descriptors' using a natural underlying geometry of the space; this paper and its sequel \cite{CHP} develop novel shape descriptors based on the Riemannian geometry of the surface of the molecule. \\
\\

\begin{figure}[!htb]
\centering\includegraphics[width=4.5in]{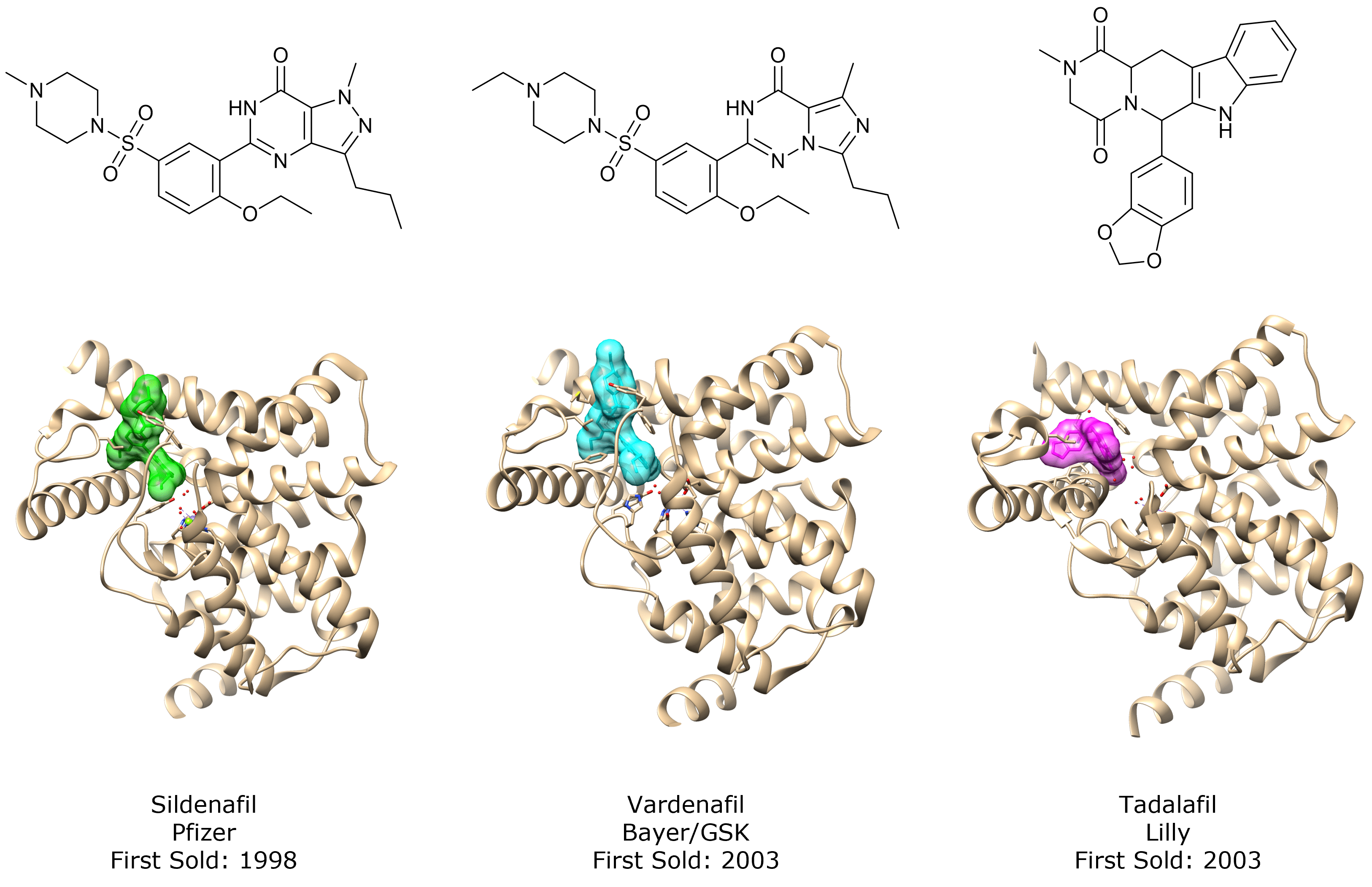}
\caption{PDE5 inhibitors which exemplify the phenomenon of scaffold hopping. Tadalafil (different chemical structure, similar shape) is an example of a scaffold hop from the first in class drug Sildenafil, and offers greatly improved performance, while Vardenafil (a "me-too" follow-up drug) only offers minor improvements.} \label{fig:scafhop}
\end{figure}

Three prominent mathematical methods for comparing shape are as  
follows (Figure \ref{fig:shapesim}): direct comparison of overlap of molecular volumes constructed from Gaussian spheres \cite{Rush_Grant_Mosyak_Nicholls_2005}; vector descriptors constructed based on the distribution of atomic distances within molecules \cite{Ballester_Richards_2007, Schreyer_Blundell_2012,  Shave_Blackburn_Adie_Houston_Auer_Webster_Taylor_Walkinshaw_2015} and vector descriptors based on the consideration of the molecular surface \cite{Max_Getzoff_1988, Novotni_Klein_2003, SCPG}. For further details on existing shape descriptors, we refer the interested reader to reference \cite{Kumar_Zhang_2018}.\\
\\ 
In this paper we focus on a representation derived from the geometry of a molecular surface. While molecules do not have a true surface in the classical way that an apple has a skin, its consideration is still useful in the interpretation of molecular shape and size, as the surface captures many of the same features as the volume, but is less expensive to compute~\cite{Lipkowitz_Boyd_1990}. Although the use of these methods is still in its infancy compared to other shape similarity methods, they offer a compromise between the low computational cost of distance-based descriptors and the accuracy of volume-based methods.

\begin{figure}[!htb] 
\centering\includegraphics[width=5in]{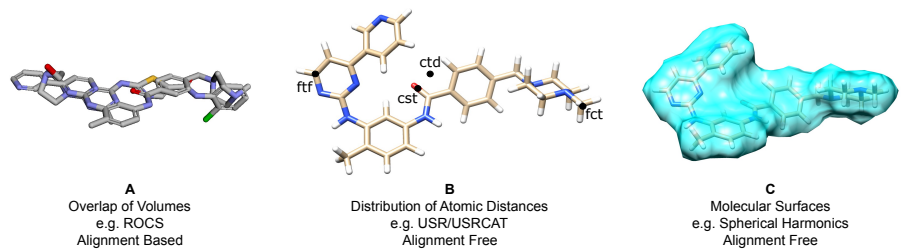}
\caption{The three main approaches to approximating molecular shape: (A) Volume based shape similarity \cite{Rush_Grant_Mosyak_Nicholls_2005}; (B) Atomic-distance based shape similarity \cite{Ballester_Richards_2007}; (C) Surface based shape-similarity \cite{Lipkowitz_Boyd_1990}.} \label{fig:shapesim}
\end{figure}

A well-known method of studying the geometry of a surface (or Riemannian manifold in higher dimensions) is to analyse the behaviour of the \textit{Laplace-Beltrami operator} ($\Delta$) or the \textit{Laplacian} associated with the surface.  More specifically, solving the equation
$$\Delta\varphi = \lambda\varphi,$$
where $\lambda$ is an \textit{eigenvalue} and $\varphi$ is the associated \textit{eigenfunction}, gives an ordered sequence of $\lambda$ known as the \textit{spectrum} of the Laplacian (we use the convention that our eigenvalues are non-negative). For a surface $\mathcal{S} \subset \mathbb{R}^{3}$, the spectrum can be truncated up to the $k^{th}$ eigenvalue producing a vector, 
\[
 v_{\mathcal{S},k}=(\lambda_{1},\lambda_{2},\ldots,\lambda_{k-1},\lambda_{k}),
 \]
that can be used as a shape descriptor. This idea was first introduced by Reuter, Wolter, and Peinecke in Ref.~\cite{RWP} and referred to as `shape-DNA', and was later extended by Seddon \textit{et al.} for molecular shape approximation \cite{SCPG}. We refer the reader to Section \ref{sec:3} for a detailed account of this method.
 
An existing framework that computes the spectrum of the Laplacian (for example, the method used in Seddon \textit{et al.}) involves four steps: 
\begin{enumerate}
    \item The molecular surface is represented by a triangulated mesh (a lattice graph in 3D space composed of N vertices and M edges). 
    \item The Laplace-Beltrami spectrum, $\Delta\varphi = \lambda\varphi$, is then approximated using the finite element method.
    \item Vectors are assigned to each vertex of the surface which describe the surrounding space, giving a local geometry descriptor.
    \item The local descriptors are then clustered to create a global descriptor of shape in order to quantify similarity of two objects. \\
\end{enumerate} 

The method was found to compare well to existing widely used atomic-distance based and volume based methods in a retrospective benchmark study using the Directory of Useful Decoys - Enhanced (DUD-E)~\cite{Mysinger_Carchia_Irwin_Shoichet_2012}. Such studies are used in lieu of an absolute measure of performance, where the usefulness of a shape descriptor is inferred from its ability to place true active molecules higher than decoys in a ranked list. \\
\\
\subsection{Contributions of This Paper}
 Here we propose RGMolSA, an alternative method for approximating the spectrum of the Laplacian that is derived from the mathematical theory of Riemannian Geometry.  By exploiting the fact that the one can represent the molecular surface by a series of intersecting spheres, we give an explicit description of a mathematical object called the \textit{Riemannian metric}. Using this explicit description, we are able to compute in a closed form the integrals that one needs to calculate in the standard method for approximating the spectrum of the Laplacian.  This removes the need to compute the mesh, which we hypothesise will increase the speed and give a more finely tuned description of the surface.  An overview of the steps required for computing the vector of nine eigenvalues used to represent the surface is given in Figure \ref{fig:silflow}. Due to their known similarity, the PDE5 inhibitors in Figure \ref{fig:scafhop} will be used throughout as proof of concept, and to discuss the dependence of the results on factors such as the different conformers a molecule can adopt.\\
\\
One possible deficiency in our approximation of the spectrum is that it involves the choice of a base atom; the descriptor provides a good representation of the geometry of the surface near to this base atom but the atoms further away are `higher frequency' objects and we would need many eigenvalues to be able to describe them accurately (as mentioned, our descriptor approximates the spectrum up to the ninth eigenvalue but in theory the method could be used to compute more).  In the sequel to this paper, we use the Riemannian metric we have computed in the theory of K\"ahler quantisation to produce a completely novel shape descriptor which lies in the manifold $GL(N,\mathbb{C})/U(N)$;  this manifold-valued descriptor is global.  The reader should regard this paper as laying the foundations for the use of these methods from Riemannian and complex geometry in molecular similarity searching; in Section \ref{section:conclusion} we discuss possible ways of refining and developing our geometric shape descriptors.     

\begin{figure}[!htb] 
\centering\includegraphics[width=4.5in]{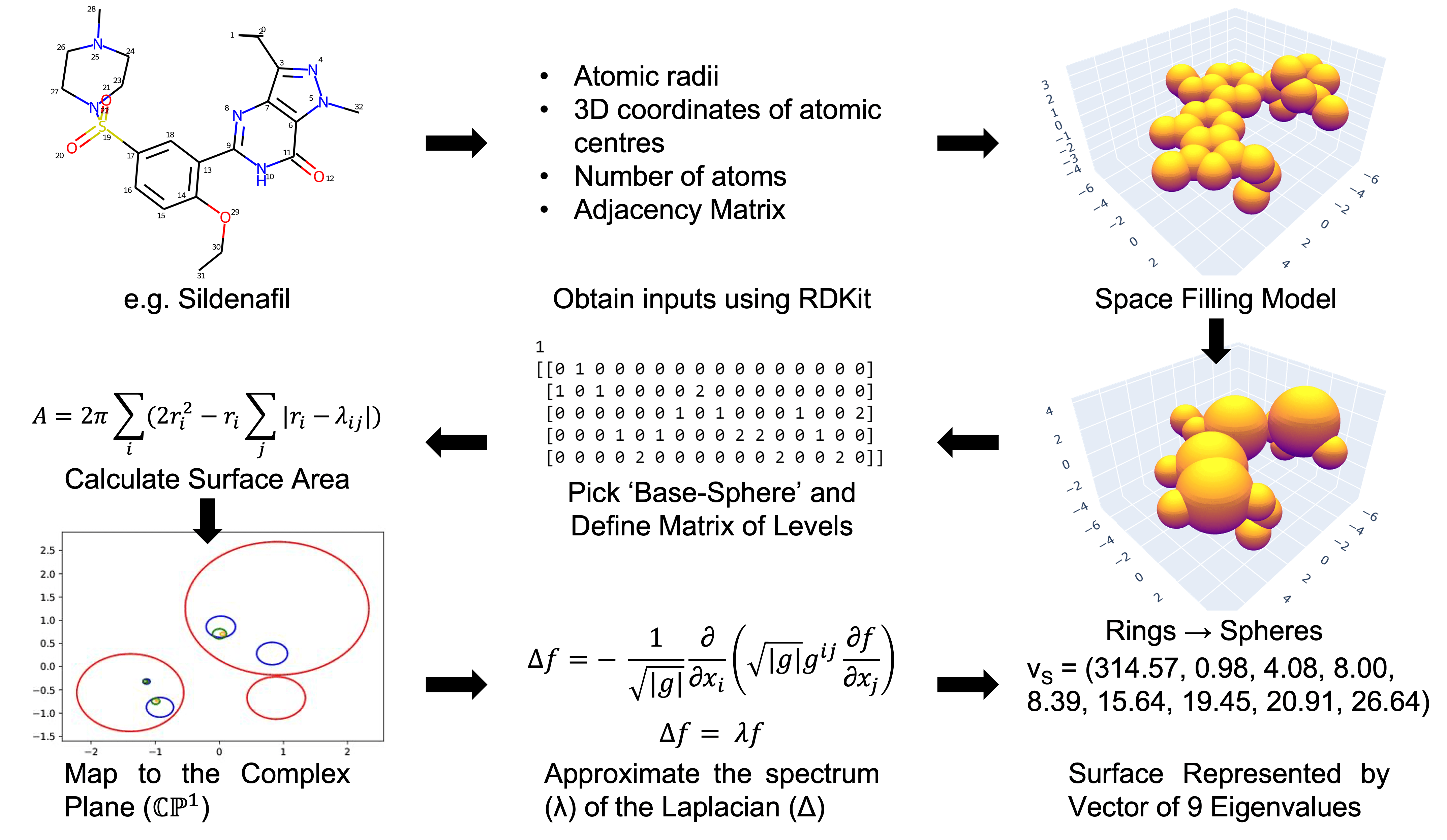}
\caption{Key steps involved in the computation of our shape descriptor using Sildenafil as an example.} \label{fig:silflow}
\end{figure}

\section{The Mathematical Description of Shape: Riemannian Geometry}  \label{section:riemanniangeometry}
\subsection{The Riemannian Metric} \label{section:riemannmetric}
We will regard the molecular surface $\mathcal{S}$ as a mathematical surface (or a closed orientable two-dimensional manifold in more technical language). We refer the reader to Chapter 3 of the book \cite{DonRS} for a technical definition. Surfaces are classified by their genus, which is a topological invariant that counts the number of `holes' the surface possesses. {\bf The first, and most important, assumption we make in this paper is that the surface has genus zero which means that there is a continuous one-to-one map from the surface $\mathcal{S}$ onto the unit sphere $\mathbb{S}^{2}$}. This might seem a problematic assumption from the point of view of considering molecules containing rings such as benzene or pyridine, but we shall address this difficulty in Section \ref{section:moldata}. One limitation introduced by this assumption is the inability to describe macrocyclic molecules (large rings containing more than twelve atoms), which have a genuine hole in their surface \cite{Yudin_2015}. While it would be useful for chemists to be able to deal with macrocycles, such molecules do not feature heavily in small-molecule drug design.\\
\\
Part of the mathematical definition of a surface $\mathcal{S}$ involves the idea of a `coordinate chart' whereby the surface is described by continuous maps 
$\psi:U\rightarrow V\subset \mathcal{S}$, where $U \subset \mathbb{R}^{2}$ is an open subset of the plane $\mathbb{R}^{2}$ and $V$ is an open subset of the surface $\mathcal{S}$ (one can think of $V$ as the intersection of the surface $\mathcal{S}$ with an open set in three-dimensional Euclidean space $\mathbb{R}^{3}$). Each map furnishes the subset $V\subset \mathcal{S}$ with coordinates $(x,y) \in U$.  In practice, we will see how to parameterise the entire molecular surface (except for one point) by a single map $\psi:\mathbb{R}^{2} \rightarrow \mathcal{S}$; we will identify $\mathbb{R}^{2}$ with the complex plane $\mathbb{C}$.\\
\\
The mathematical theory of shape we will use is that of Riemannian geometry where the shape of a surface is encoded by a \textit{Riemannian metric}. In the case of a surface $\mathcal{S}$, one can think of the metric, locally at least, as positive-definite symmetric $2\times 2$ matrix 
\begin{equation} \label{eqn:met_mat}
g= \left( \begin{array}{cc} a_{11} & a_{12}\\
a_{12} &a_{22}
\end{array}
\right),
\end{equation}
where each of the $a_{ij}$ are locally-defined smooth functions on $\mathcal{S}$ (here locally-defined means we can use local coordinate charts to view them as functions in the coordinates). Rather than writing metrics in the form of a matrix of functions as in Equation (\ref{eqn:met_mat}), we will often find it convenient to record the metric in the `differential-form notation' and write
\[
g = a_{11}dx^{2}+a_{12}dx\cdot dy+a_{22}dy^{2}.
\]
Given an immersion (a differentiable function such that the derivative has full rank) ${F:U\subset\mathbb{R}^{2}\rightarrow V\subset\mathcal{S} \subset \mathbb{R}^{3}}$, one obtains an induced metric on $U$ (often identified with its image in $\mathcal{S}$) by `pulling-back' the Euclidean dot-product on $\mathbb{R}^{3}$; the pull-back of the Euclidean metric by $F$ is denoted $F^{\ast}(g_{Euc})$. 
We can write $F$ as
$$ F(x,y) = (F_{1}(x,y),F_{2}(x,y),F_{3}(x,y)),$$
where $F_{i}:U\rightarrow\mathbb{R}$ are differentiable functions. The induced metric then takes the form
\begin{equation}\label{I_Form}
g=F^{\ast}(g_{Euc})=\left( \begin{array}{cc} \frac{\partial F}{\partial x}\cdot \frac{\partial F}{\partial x}& \frac{\partial F}{\partial x}\cdot\frac{\partial F}{\partial y}\\
\frac{\partial F}{\partial x}\cdot\frac{\partial F}{\partial y}& \frac{\partial F}{\partial y}\cdot\frac{\partial F}{\partial y}
\end{array}
\right),
\end{equation}
where 
$$\frac{\partial F}{\partial x}:=\left( \frac{\partial F_{1}}{\partial x},\frac{\partial F_{2}}{\partial x},\frac{\partial F_{3}}{\partial x}\right),$$
and 
$$\frac{\partial F}{\partial y}:=\left( \frac{\partial F_{1}}{\partial y},\frac{\partial F_{2}}{\partial y},\frac{\partial F_{3}}{\partial y}\right),$$
and $\cdot$ denotes the Euclidean dot-product in $\mathbb{R}^{3}$.\\
\\
In this case, the metric $g$ is often called the $1^{st}$-fundamental form of $\mathcal{S}$. From the metric, one can define and compute geometric quantities such as distances between points or the curvature at a point. Thus our first task is to give a concrete description of the metric associated to a molecular surface. We do this by constructing a particular immersion and then computing the $1^{st}$-fundamental form.

\subsection{The Data Defining the Molecular Surface} \label{section:moldata}
Ignoring ring structures for now, we consider a molecular surface as being formed from $N$ intersecting spheres. As the positions of hydrogen atoms are well-defined and they are comparatively small in size, they are typically excluded to simplify the problem without much loss in descriptor quality. The quantity $N$ is therefore the number of heavy (non-hydrogen) atoms in the molecule. The data describing the surface are $N$ pairs $\{(c_{i},r_{i})\}_{i=1}^{N}$ where $c_{i}$ is the centre of the $i^{th}$ sphere in $\mathbb{R}^{3}$ and $r_{i}$ is the  radius. We also need an $N\times N$ adjacency matrix $T$
where 
\[
T_{ij}=\left\{\begin{array}{cc} 
1 & \textrm{if spheres {\it i} and {\it j} intersect}\\
0 & \textrm{otherwise (or }i=j) .
\end{array}\right.
\]
The constituent atoms of each ring within the molecule are then replaced with a single sphere, allowing it to be treated as having a genus of zero. To do this in practice, the centres, radii and intersections of each atom making up the ring must be accounted for. The centre is taken as the centre of mass of the contributing atoms and the collection of radii are substituted with a fixed value of 2.25 {\AA}. This value is derived from the average surface area of common rings found within drug molecules, which are typically 4-7 atoms in size and adjusted slightly to ensure the correct intersections between spheres still occur. This choice may require further tuning to improve the accuracy of the model. As the shape descriptors we employ in this paper and the sequel represent an approximation to the molecular surface, and as the rings are all relatively similar in size,  we do not believe it to be necessary to differentiate the radii of different rings. The adjacency matrix $T$ is also updated to include the intersection of the single sphere with its neighbours.\\ 
\\
This data allows for a straightforward calculation of the surface area. The following is presumably well-known, but we will give a short proof for completeness. 

\begin{proposition}
Let $\{(c_{i},r_{i})\}_{i=1}^{N}$ and $T$ be the data describing the molecule as outlined previously. Then the surface area is given by the formula
\begin{equation}\label{eqn:surf_area}
\mathrm{Area} = 2\pi \sum_{i}\left(2r_{i}^{2}-\left(r_{i}\sum_{j}T_{ij}|r_{i}-\lambda_{ij}|\right)\right),
\end{equation}
where 
\begin{equation}\label{eq:lambda}
\lambda_{ij} =\left\{\begin{array}{cc} \frac{r_{i}^{2}-r_{j}^{2}+\|c_{i}-c_{j}\|_{\mathrm{Euc}}^{2}}{2\|c_{i}-c_{j}\|_{\mathrm{Euc}}} & \mathrm{if} \qquad i\neq j, \\
0 & \mathrm{if} \qquad i = j,
\end{array}\right. 
\end{equation}
and 
\[
 \|x-y\|_{\mathrm{Euc}} :=\sqrt{\sum_{i=1}^{3}(x_{i}-y_{i})^2},
 \]
 for $x,y, \in \mathbb{R}^{3}$.
\end{proposition}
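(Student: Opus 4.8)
The plan is to compute the area of $\mathcal{S}$ as the boundary of the union of the $N$ balls, decomposing it sphere-by-sphere. A point on the $i^{th}$ sphere lies on $\mathcal{S}$ precisely when it is not contained in the interior of any other ball, so the contribution of sphere $i$ is its full area $4\pi r_{i}^{2}$ minus the areas of the spherical caps swallowed by its neighbours $j$ with $T_{ij}=1$. The single classical ingredient I would invoke is Archimedes' cap theorem: a spherical cap of height $h$ cut from a sphere of radius $r$ has area $2\pi r h$, independent of where on the sphere the cap sits. If one prefers a self-contained argument, this follows from the standard spherical parametrization $(r\sin\phi\cos\theta, r\sin\phi\sin\theta, r\cos\phi)$, whose area element is $r^{2}\sin\phi\,d\phi\,d\theta$, giving cap area $2\pi r^{2}(1-\cos\phi_{0}) = 2\pi r h$ for a cap of height $h=r(1-\cos\phi_{0})$.

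First I would pin down the geometric meaning of $\lambda_{ij}$. Placing $c_{i}$ at the origin and $c_{j}$ at $(d,0,0)$ with $d=\|c_{i}-c_{j}\|_{\mathrm{Euc}}$, a point $p$ on sphere $i$ lies inside ball $j$ iff $\|p-c_{j}\|^{2} < r_{j}^{2}$; expanding and using $\|p\|=r_{i}$ reduces this to $x > \tfrac{r_{i}^{2}-r_{j}^{2}+d^{2}}{2d} = \lambda_{ij}$. Hence $\lambda_{ij}$ is exactly the signed distance from $c_{i}$ to the plane of the intersection circle, measured toward $c_{j}$, and the cap of sphere $i$ engulfed by ball $j$ is the portion with $x>\lambda_{ij}$, whose height is $r_{i}-\lambda_{ij}$.

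Next I would justify the modulus and verify positivity. Factoring gives $r_{i}-\lambda_{ij} = \tfrac{(r_{i}+r_{j}-d)(r_{j}-r_{i}+d)}{2d}$, and in the genuine-intersection regime $|r_{i}-r_{j}| < d < r_{i}+r_{j}$ both factors are positive, so $r_{i}-\lambda_{ij}>0$ and $|r_{i}-\lambda_{ij}| = r_{i}-\lambda_{ij}$; the absolute value simply makes the expression robust to degenerate or containment configurations. Assembling, sphere $i$ contributes $4\pi r_{i}^{2} - \sum_{j} T_{ij}\,2\pi r_{i}(r_{i}-\lambda_{ij})$, and summing over $i$ and extracting the common factor $2\pi$ recovers the stated formula.

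The hard part is that this sphere-by-sphere subtraction is exact only when, on each sphere $i$, the removed caps are pairwise disjoint. If three or more balls mutually overlap, two caps on sphere $i$ can intersect, and subtracting each independently double-counts the shared region, so the formula would underestimate the true area by those higher-order overlap terms. A fully rigorous equality would require either an inclusion--exclusion correction or an explicit hypothesis ruling out overlapping caps from triple (and higher) intersections; in the molecular setting, where the expression is used as a faithful approximation to the surface area, I would state this cap-disjointness as the operative assumption and note that the formula is precisely the leading term of the exact inclusion--exclusion expansion.
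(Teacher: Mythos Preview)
Your proof is correct and follows essentially the same route as the paper: decompose the surface sphere-by-sphere, identify $\lambda_{ij}$ as the signed distance from $c_{i}$ to the plane of intersection, and apply the spherical-cap area formula $2\pi r h$ to each removed cap before summing. Your derivation is in fact more explicit than the paper's (which simply cites ``elementary trigonometric considerations''), and your caveat about triple intersections and cap-disjointness is a genuine limitation of the formula that the paper does not mention.
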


\begin{proof}
To find the molecular area we find the contribution of each sphere by $4\pi r_{i}^{2} - K_{i}$, where $K_{i}$ is the area missing due to the $i^{th}$ sphere intersecting its neighbours, and then sum over the $i$. If spheres $i$ and $j$ intersect then it will be useful to find the equation of the plane of intersection of the spheres $i$ and $j$. If we consider the line joining the centres of the $i^{th}$ and $j^{th}$ spheres, then the plane passes through a point $p_{ij}$ on this line given by
\[
p_{ij} = \lambda_{ij}\left(\frac{c_{j}-c_{i}}{\|c_{i}-c_{j}\|_{\mathrm{Euc}}}\right)+c_{i},
\]
for some value of $\lambda_{ij} >0$. Elementary trigonometric considerations yield the formula
\[
\lambda_{ij} = \frac{r_{i}^{2}-r_{j}^{2}+\|c_{i}-c_{j}\|_{\mathrm{Euc}}^{2}}{2\|c_{i}-c_{j}\|_{\mathrm{Euc}}}.
\]
The equation of the plane is thus $x\cdot (c_{i}-c_{j}) = p_{ij}\cdot(c_{i}-c_{j})$.
The area of the $i^{th}$ sphere that the $j^{th}$ sphere `removes' is thus
$$ 
2\pi r_{i}|r_{i}-\lambda_{ij}|.
$$
The result follows by summing over $j$ to obtain the term  
\[
K_{i} = 2\pi r_{i}\sum_{j}T_{ij}|r_{i}-\lambda_{ij}|.
\]
(Note in the formula (\ref{eqn:surf_area}) we weight the $T_{ij}$ entry of the adjacency matrix so that we only remove terms involving genuine neighbours). The formula follows by summation over $i$.
\end{proof}

We record the area as the first geometric invariant of the surface and then re-scale the data so that the resulting surface has area $4\pi$ (the area of the standard unit sphere $\mathbb{S}^{2}$). One facet of Riemannian geometry is that two metrics which differ only in scale are viewed as having essentially the same shape. Re-scaling so that all molecules are treated as having the same surface area addresses this. {The fixed size of the radii of atoms mean this phenomenon would not be physically possible for molecules; however, as re-scaling significantly simplifies the subsequent mathematics it is still included in the generation of our descriptor. A molecule of twice the size as the one we are comparing to will be unable to fit in the protein binding pocket of interest, and is therefore not a useful suggestion as a potential similar molecule. We account for the re-scaling in our descriptor by replacing the first eigenvalue (which is always zero) with the original, unscaled surface area, weighted so as not to dominate in the similarity calculation (see Section \ref{section:SAweight}).} 

\subsection{Piecewise Stereographic Projection}

To describe the surface, we will construct a map (which we call \textit{piecewise stereographic projection})
\[
\Phi_{ps}:\mathbb{C}\rightarrow \mathcal{S} \subset\mathbb{R}^{3}.
\]
By adding a `point at infinity', it is standard to identify $\mathbb{S}^{2}$ with ${\mathbb{C}\cup\{\infty\}}$ (the latter set is often called the Riemann sphere \cite{DonRS}). This identification is not canonical but any two identifications differ by an action of the automorphism group of the Riemann sphere, $PSL(2,\mathbb{C})$. We shall see later in the article how various choices in our method are covered by this action. We can then extend the map $\Phi_{ps}$ to establish a diffeomorphism between $\mathcal{S}$ and the sphere $\mathbb{S}^{2}$.\\
\\
The construction of the map $\Phi_{ps}$ requires the choice of a starting atom which we will refer to as the base sphere, or `level-$0$' sphere. We will choose the atom closest to the centre of mass by first finding the centroid of the molecule and then taking the atom with the smallest Euclidean distance from this point.
This inductively leads to the notion of a level-$k$ sphere for $k>0$:
\[
\textrm{a level-$k$ sphere is one which intersects level-$(k-1)$ sphere.}
\]
We further divide the level-$k$ spheres into two types:
\begin{itemize}
\item a \textit{terminal} level-$k$ sphere only intersects a level-$(k-1)$ sphere,
\item a \textit{non-terminal} level-$k$ sphere intersects a level-$(k+1)$ sphere.
\end{itemize}
The map $\varphi_{r}:\mathbb{C}\rightarrow \mathbb{R}^{3}$ given by
\[
\varphi_{r}(z) = \frac{r}{1+|z|^{2}}(2\mathrm{Re}(z),2\mathrm{Im}(z),|z|^{2}-1),
\]
is the standard stereographic projection from the complex plane onto a sphere of radius $r$  centred at the origin; the image of the map does not contain the `north pole' $(0,0,r)$.  It is useful to note that a truncated sphere, with $x_{3}\leq h$ for $h\in (-r,r)$, is the image under $\varphi_{r}$ of the disc 
\[
\mathbb{D}(\mathcal{R})  = \left\{z\in \mathbb{C} \ : \ |z|\leq \mathcal{R}\right\}\subset \mathbb{C},
\]
of radius
\[
\mathcal{R} = \sqrt{\left(\frac{r+h}{r-h} \right)}.
\]
By re-scaling $z\rightarrow \tau z$, where $\tau>0$, we can map, via $\varphi_{r}$, a disc of an arbitrary radius onto a truncated sphere of radius $r$ and of arbitrary height.\\
\\
As a model case, we consider two intersecting spheres of radii $r_1$ and $r_2$ with the centre of the second sphere at the origin and the centre of the first sphere at $\mathbf{c} =(0,0,c)$ with $c>0$. The height of each sphere is given by
\[
h_{i} = 2r_{i}-|r_{i}-\lambda_{ij}|.
\]
where $\lambda_{ij}$ is given by Equation (\ref{eq:lambda}). The first sphere is the image under the map $\varphi_{r_{1}}+\mathbf{c}$ of the complement of the disc $\mathbb{D}(R_{1})$ where
\[
R_{1} = \sqrt{\frac{2r_{1}-h_{1}}{h_{1}}}.
\]
Put more succinctly, the first sphere is
\[
\{\varphi_{r_{1}}(z)+c : z\in \mathbb{C} \backslash \mathbb{D}(R_{1})\} \subset \mathbb{R}^{3}.
\]
The second sphere is the image under $\varphi_{r_{2}}$ of the disc $\mathbb{D}(R_{2})$ where
\[
R_{2} = \sqrt{\frac{h_{2}}{2r_{2}-h_{2}}},
\]
that is, the second sphere is
\[
\{\varphi_{r_{2}}(z) : z\in  \mathbb{D}(R_{2})\} \subset \mathbb{R}^{3}.
\]
Hence we define the \textit{2-atom piecewise stereographic projection} map
\[
\Phi_{2A}(z) = \left\{\begin{array}{cc}
\frac{r_{1}}{1+|\tau z|^{2}} ({2\mathrm{Re}(\tau z)},2\mathrm{Im}(\tau z),|\tau z|^{2}-1+c) & \mathrm{if} \ |z|\geq R_{2},\\
\frac{r_{2}}{1+|z|^{2}} ({2\mathrm{Re}(z)},2\mathrm{Im}(z),|z|^{2}-1) & \mathrm{if} \ |z|<R_{2},
 \end{array}\right.
\]
where $\tau=\dfrac{R_{1}}{R_{2}}$.\\
\\
We can compute explicitly the metric $\Phi_{2A}^{\ast}(g_{Euc})$; if we identify $z=x+\sqrt{-1}y$, then
\[
\Phi_{2A}^{\ast}(g_{Euc}) =  \left\{\begin{array}{cc}
\dfrac{4r_{2}^{2}(dx^{2}+dy^{2})}{(1+|z|^{2})^{2}} & \mathrm{if} \ |z|<R_{2},\\
\\
\dfrac{4\tau^{2}r_{1}^{2}(dx^{2}+dy^{2}) }{(1+|\tau z|^{2})^{2}} &  \mathrm{if} \ |z|\geq R_{2}.\end{array} \right.
\]
The surface is built from the intersection of truncated spheres after they have been rotated and translated into the correct position; we now consider how rotation interacts with stereographic projection. \\
\\
Rotations about an axis through the origin in three-dimensional Euclidean space $\mathbb{R}^{3}$ can be encoded by a $3\times 3$  matrix $M$ (the rotation sends a point $p\in \mathbb{R}^{3}$ to $Mp$).  The following matrix rotates the vector $(0,0,1)$ onto the vector $v=(v_{1},v_{2},v_{3})$ (where we assume $v \neq (0,0,-1)$)
\[
M_{rot} = \left( \begin{array}{ccc}
1-\frac{v_{1}^{2}}{1+v_{3}} & - \frac{v_{1}v_{2}}{1+v_{3}} & v_{1}\\
- \frac{v_{1}v_{2}}{1+v_{3}} & 1-\frac{v_{2}^{2}}{1+v_{3}} & v_{2}\\
-v_{1} & -v_{2} & v_{3}
\end{array}
\right).
\]

The inverse of the map $\varphi_{r}$ is the map
${\varphi_{r}^{-1}:\mathbb{S}_{r}^{2}\backslash\{(0,0,r)\}\rightarrow \mathbb{C}}$ given by
\[
\varphi_{r}^{-1}(x_{1},x_{2},x_{3}) = \frac{x_{1}}{r-x_{3}}+\sqrt{-1}\frac{x_{2}}{r-x_{3}}.
\]
(If we wish to extend to the whole sphere, this map sends `the North Pole' $N=(0,0,r)$ to the point $\infty$). 

The rotation $M_{rot}$ induces a map $\eta: {\mathbb{C}}\rightarrow {\mathbb{C}}$ that makes the following diagram commute:
\begin{large}
\[
\begin{tikzcd}
  \mathbb{S}^{2}_{r} \arrow[r, "M_{rot}"] \arrow[d, "\varphi_{r}^{-1}"]
    & \mathbb{S}^{2}_{r}  \arrow[d, "\varphi_{r}^{-1}"] \\
  {\mathbb{C}} \arrow[r, black, "\eta" black]
& {\mathbb{C}} \end{tikzcd}
\]
\end{large}
The following can be proved by elementary algebra and we omit the proof.
\begin{lemma}\label{lem:rot_lem}
 The induced map $\eta:\mathbb{C}\rightarrow\mathbb{C}$ is given by the M\"obius transformation
\[
\eta(w) = \frac{\alpha w+\beta}{\bar{\alpha} -\bar{\beta}w},
\]
with $|\alpha|^{2}+|\beta|^{2}=1$.  The constants $\alpha \in \mathbb{R}$ and $\beta\in \mathbb{C}$ are given by
\begin{eqnarray}
\alpha = \sqrt{\left( \frac{1+v_{3}}{2} \right)},\\
\beta = -\sqrt{\left(\frac{1}{2(1+v_{3})} \right)}(v_{1}+\sqrt{-1}v_{2}).
\end{eqnarray}
Furthermore, the image under $\eta$ of a disc $\mathbb{D}(R)$ of radius $R>0$ and centre $0$ is give by the set of $w \in \mathbb{C}$ such that
\[
\bigg|w-\frac{(1+R^{2})\alpha\beta}{\alpha^{2}-R^{2}|\beta|^{2}}\bigg| \sim \frac{R^{2}}{(\alpha^{2}-R^{2}|\beta|^{2})^{2}},
\]
where $\sim$ is $\leq$ if $(\alpha^{2}-R^{2}|\beta|^{2})>0$ and $\sim$ is $\geq$ if $(\alpha^{2}-R^{2}|\beta|^{2})<0$.
\end{lemma}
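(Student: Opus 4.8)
The statement splits into two independent parts, each of which I would handle by unwinding a definition and then doing elementary algebra. For the first part, the commuting square \emph{is} the definition of $\eta$, namely $\eta=\varphi_{r}^{-1}\circ M_{rot}\circ\varphi_{r}$, so the plan is to compose these three explicit maps. Beginning with $w\in\mathbb{C}$, I would form $\varphi_{r}(w)$, apply the rotation $M_{rot}$, and then apply $\varphi_{r}^{-1}(x_{1},x_{2},x_{3})=(x_{1}+\sqrt{-1}x_{2})/(r-x_{3})$. The key simplification is that the common scalar $r/(1+|w|^{2})$ carried by $\varphi_{r}(w)$ factors out of both the numerator $x_{1}+\sqrt{-1}x_{2}$ and the denominator $r-x_{3}$ of $\varphi_{r}^{-1}$ (the latter because the rotated point still lies on the sphere of radius $r$), so it cancels and leaves an $r$-independent quotient. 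Rewriting $\mathrm{Re}(w),\mathrm{Im}(w),|w|^{2}$ in terms of $w,\bar w$ and simplifying — at which point the antiholomorphic terms must cancel — produces a M\"obius transformation in $w$; one then reads off $\alpha,\beta$ and checks $|\alpha|^{2}+|\beta|^{2}=1$, which reduces to the unit-vector condition $v_{1}^{2}+v_{2}^{2}+v_{3}^{2}=1$.

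A cleaner route to the first part, which I would prefer, sidesteps the full matrix product by invoking the classical fact that stereographic projection conjugates rotations of the sphere to M\"obius transformations of precisely the special form $w\mapsto(\alpha w+\beta)/(\bar\alpha-\bar\beta w)$ with $|\alpha|^{2}+|\beta|^{2}=1$ (the image of $SU(2)$ inside $PSL(2,\mathbb{C})$, which is also what forces the normalisation). Granting this, it remains only to pin down $\alpha$ and $\beta$, and a single evaluation suffices: since $\varphi_{r}(0)$ is the south pole $(0,0,-r)$, which $M_{rot}$ sends to $-rv$, one computes $\eta(0)=\varphi_{r}^{-1}(-rv)=-(v_{1}+\sqrt{-1}v_{2})/(1+v_{3})$, and matching this against $\eta(0)=\beta/\bar\alpha$ together with $\alpha\in\mathbb{R}$ and the normalisation recovers the stated closed forms. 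Evaluating at $w=\infty$ (the north pole, sent to $rv$) gives $\eta(\infty)=(v_{1}+\sqrt{-1}v_{2})/(1-v_{3})$ as an independent check, the consistency relation $\eta(0)\overline{\eta(\infty)}=-1$ being exactly $v_{1}^{2}+v_{2}^{2}=1-v_{3}^{2}$.

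For the second part I would use that $w\in\eta(\mathbb{D}(R))$ if and only if $\eta^{-1}(w)\in\mathbb{D}(R)$, i.e. $|\eta^{-1}(w)|\le R$. Inverting the unitary M\"obius map (using $\alpha\in\mathbb{R}$ and $|\alpha|^{2}+|\beta|^{2}=1$) gives $\eta^{-1}(w)=(\alpha w-\beta)/(\bar\beta w+\alpha)$, so the condition becomes $|\alpha w-\beta|^{2}\le R^{2}|\bar\beta w+\alpha|^{2}$. Expanding both Hermitian squares and collecting the coefficients of $|w|^{2}$, $w$ and $\bar w$ rewrites this as $A|w|^{2}-(1+R^{2})\alpha\bar\beta\,w-(1+R^{2})\alpha\beta\,\bar w+(|\beta|^{2}-R^{2}\alpha^{2})\le0$, where $A=\alpha^{2}-R^{2}|\beta|^{2}$. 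Dividing through by $A$ — which reverses the inequality exactly when $A<0$, and this is precisely the dichotomy encoded by the symbol $\sim$ — and completing the square yields a disc centred at $(1+R^{2})\alpha\beta/(\alpha^{2}-R^{2}|\beta|^{2})$, matching the stated centre.

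The one step that is genuinely more than bookkeeping, and the main obstacle I anticipate, is the radius. Completing the square bounds $|w-w_{0}|^{2}$ by $|w_{0}|^{2}-(|\beta|^{2}-R^{2}\alpha^{2})/A$, and the claim is that over the common denominator $A^{2}$ the numerator $(1+R^{2})^{2}\alpha^{2}|\beta|^{2}-(\alpha^{2}-R^{2}|\beta|^{2})(|\beta|^{2}-R^{2}\alpha^{2})$ collapses, via $|\alpha|^{2}+|\beta|^{2}=1$, to $R^{2}(\alpha^{2}+|\beta|^{2})^{2}=R^{2}$; since this is where a cross-term or sign is easiest to mislay, I would expand with the target factorisation $(\alpha^{2}+|\beta|^{2})^{2}$ kept firmly in view. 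Finally, I would flag an apparent typo in the statement: the computation yields $|w-w_{0}|^{2}=R^{2}/(\alpha^{2}-R^{2}|\beta|^{2})^{2}$, so the left-hand side should be the squared modulus (or else the right-hand side should carry a square root).
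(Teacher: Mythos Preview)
Your proposal is correct and complete. The paper omits the proof entirely (``The following can be proved by elementary algebra and we omit the proof''), so there is no approach to compare against; your argument supplies exactly the elementary algebra the authors gesture at, and your preferred route via the $SU(2)$--M\"obius correspondence together with evaluation at $w=0$ is the efficient way to extract $\alpha,\beta$. Your computation for the image disc is also right, and your flag about the typo is well founded: when the authors \emph{use} the lemma a few lines later they write $\bigl|z-\cdots\bigr|^{2}\sim R^{2}/(\alpha^{2}-R^{2}|\beta|^{2})^{2}$, with the square on the modulus, confirming that the lemma statement is missing a square on the left-hand side.
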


To obtain the fundamental form of a general pair of intersecting spheres we use the map $\eta$.
Viewing $\eta^{-1}$ as a map $\eta^{-1}:{\mathbb{C}}\rightarrow ({\mathbb{C}},\Phi_{2A}^{\ast}(g_{Euc}))$, that is as a map from an abstract copy of $\mathbb{C}$ onto the copy endowed with the 2-atom metric $\Phi^{\ast}_{2A}(g_{Euc})$, we can again pullback by $\eta^{-1}$ to get the induced `skew' 2-atom metric. Using Lemma \ref{lem:rot_lem} see that the metric $g=\eta^{-1 \ast}(\Phi_{2A}^{\ast}(g_{Euc}))$ can be written as
\begin{equation*}
g = \left\{\begin{array}{cc}
\dfrac{4r_{2}^{2}(dx^{2}+dy^{2})}{(1+|z|^{2})^{2}} & \mathrm{if} \  \bigg|z-\dfrac{(1+R_{2}^{2})\alpha\beta}{(\alpha^{2}-R_{2}^{2}|\beta|^{2})}\bigg|^{2} \geq  \dfrac{R_{2}^{2}}{(\alpha^{2}-R_{2}^{2}|\beta|^{2})^{2}},\\
\\
\dfrac{4\tau^{2}r_{1}^{2}(dx^{2}+dy^{2})}{(|\bar{\beta}z+\alpha|^{2}+\tau^{2}|\alpha z-\beta|^{2})^{2}} & \mathrm{if} \ \bigg|z-\dfrac{(1+R_{2}^{2})\alpha\beta}{(\alpha^{2}-R_{2}^{2}|\beta|^{2})}\bigg|^{2} \leq  \dfrac{R_{2}^{2}}{(\alpha^{2}-R_{2}^{2}|\beta|^{2})^{2}},\\
\end{array} \right.
\end{equation*}
where $R_{1},R_{2},\tau$ are all given as in the preceding discussion.\\
\\
It will be useful to note
\[
\dfrac{2\tau^{2}r_{1}^{2}}{(|\bar{\beta}z+\alpha|^{2}+\tau^{2}|\alpha z-\beta|^{2})^{2}} = \dfrac{2(|\beta|^{2}+\tau^{2}\alpha^{2})^{-2}\tau^{2}r_{1}^{2}}{(|z-D_{1}|^{2}+D_{2})^{2}},
\]
where
\begin{equation*}
D_{1} = \dfrac{(\tau^{2}-1)\alpha\beta}{(|\beta|^{2}+\tau^{2}\alpha^{2})} \qquad \mathrm{and} \qquad D_{2} = \frac{\tau^{2}}{(|\beta|^{2}+\tau^{2}\alpha^{2})^{2}}
\end{equation*}
Actually, the precise expression here is not what is important; what we really wish to highlight is that the metric is of the form
\begin{equation}\label{eqn:met_form}
g = \left\{\begin{array}{cc}
\dfrac{4r_{2}^{2}(dx^{2}+dy^{2})}{(1+|z|^{2})^{2}} & \mathrm{if} \  z\not \in \mathbb{D}(a,R),\\
\\
\frac{C(dx^{2}+dy^{2})}{(|z-A|^{2}+B)^{2}} & \mathrm{if} \  z \in \mathbb{D}(a,R),
\end{array} \right.    
\end{equation}

where $R,B,C>0$ and $a,A\in \mathbb{C}$ (here $\mathbb{D}(a,R)$ is the disc of radius $R$ centred at $a\in \mathbb{C}$). The values of $a,R,A,B$ and $C$ are all computable from the original data in an explicit, albeit complicated, manner.\\ 
\\
In our language, the second sphere is our level-$0$ sphere centred at the origin and the first sphere is a (terminal) level-1 sphere.  As the induced maps $\eta$ act as isometries on the level-0 sphere (i.e. pullback preserves the form of the metric) we can immediately extend this discussion to other level-1 spheres.  Each one will have an associated disc $\mathbb{D}(a,R)$ where the metric takes the form
\[
\frac{C}{(|z-A|^{2}+B)^{2}}(dx^{2}+dy^{2}).
\]
To extend to level-2 spheres and beyond we repeat the process inductively. Geometrically we can rotate a given level-1 sphere so that its centre is on the positive $x_{3}$ axis and then translate so that it is centred at the origin.  We can then treat the level-2 spheres as level-1 spheres relative to it and proceed as previously. In the complex plane, this amounts to mapping the disc $\mathbb{D}(a,R)$ corresponding to the sphere  to the origin by the map $\eta^{-1}$ (where $\eta$ is induced by the rotation) and then re-scaling the disc by $\tau$.  This produces a new coordinate $\xi=\tau\eta^{-1}(z)$ where the metric has the form (\ref{eqn:met_form}) in $\xi$. In the plane $\mathbb{C}$, the level-2 sphere will correspond to a disc $\mathbb{D}(a',R') \subset \mathbb{D}(a,R)$ where the metric has the form  
\[
\frac{C'}{(|z-A'|^{2}+B')^{2}}(dx^{2}+dy^{2}).
\]
The process of taking a disc to the origin by a M\"obius transformation, performing stereographic projection, then rotating and translating the truncated sphere into position is how we obtain the map $\Phi_{ps}$.\\
\\
To implement the piecewise stereographic projection in Python, it is convenient to ensure the `north pole' (the point on the surface not mapped to by $\Phi_{ps}$) is in the level-0 sphere and so is not covered by a higher-level sphere.  This can be done by rotating the initial data if, after centering the level-0 at the origin, the point $(0,0,r_{B})$ is not part of the surface.\\
\\
Putting all this together, we record the results of this section as a theorem.
\begin{theorem}
The metric $g=\Phi_{ps}^{\ast}(g_{Euc})$ induced by the mapping $\Phi_{ps}:\mathbb{C}\rightarrow \mathcal{S}\subset \mathbb{R}^{3}$ is given by
\begin{equation}\label{eqn:metric_form_gen}
g = \left\{\begin{array}{cc}
\frac{4r_{B}^{2}}{(1+|z|^{2})^{2}}(dx^{2}+dy^{2}) & \mathrm{if} \  z \in\mathcal{C}\\ 
& \\
\frac{C_{1}}{(|z-A_{1}|^{2}+B_{1})^{2}}(dx^{2}+dy^{2}) & \mathrm{if} \  z \in \mathbb{D}(a_{1},R_{1}),\\
 & \\
\frac{C_{2}}{(|z-A_{2}|^{2}+B_{2})^{2}}(dx^{2}+dy^{2}) & \mathrm{if} \  z \in \mathbb{D}(a_{2},R_{2}),\\
\vdots & \vdots \\
\frac{C_{N-1}}{(|z-A_{N-1}|^{2}+B_{N-1})^{2}}(dx^{2}+dy^{2}) & \mathrm{if} \  z \in \mathbb{D}(a_{N-1},R_{N-1}),\\
\end{array}\right.
\end{equation}
where $r_{B}$ is the radius of the base sphere and 
\[
\mathcal{C} = \mathbb{C}\backslash \mathbb{D}(a_{1},R_{1})\cup \mathbb{D}(a_{2},R_{2})\cup \cdots \cup \mathbb{D}(a_{N-1},R_{N-1}),
\]
is the complement of the discs $\mathbb{D}(a_{1},R_{1}), \ldots, \mathbb{D}(a_{N-1},R_{N-1})$ which corresponds to the points in the base sphere.
\end{theorem}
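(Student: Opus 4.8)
The plan is to prove the theorem by induction on the level $k$ of the spheres, regarding the displayed formula (\ref{eqn:metric_form_gen}) as the consolidation of the explicit two-sphere computation carried out above with the inductive gluing procedure already sketched in the text. The organising principle is that every metric in sight is conformal to the flat metric, so that for any holomorphic change of coordinate $w=f(z)$ one has $f^{\ast}\big(\rho(w)(dx^{2}+dy^{2})\big)=\rho(f(z))\,|f'(z)|^{2}(dx^{2}+dy^{2})$; I will use this repeatedly to transport metrics between the coordinates attached to spheres at different levels.

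The base case is essentially already in hand. The rotated two-atom computation leading to (\ref{eqn:met_form}) shows that, for a single level-$1$ sphere attached to the level-$0$ sphere, the metric equals $\tfrac{4r_{B}^{2}}{(1+|z|^{2})^{2}}(dx^{2}+dy^{2})$ on the complement of a disc and equals $\tfrac{C}{(|z-A|^{2}+B)^{2}}(dx^{2}+dy^{2})$ on that disc, with the constants produced from $\alpha,\beta,\tau,r_{1}$ via Lemma \ref{lem:rot_lem} (indeed one reads off $A=D_{1}$, $B=D_{2}$, and $C$ from the rewriting of $\tfrac{2\tau^{2}r_{1}^{2}}{(|\bar{\beta}z+\alpha|^{2}+\tau^{2}|\alpha z-\beta|^{2})^{2}}$ performed above). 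Because the maps $\eta$ act as isometries of the level-$0$ sphere, the same conclusion holds simultaneously for every level-$1$ sphere, each contributing its own disc $\mathbb{D}(a_{i},R_{i})$.

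For the inductive step I would take a non-terminal level-$k$ sphere whose region is, by hypothesis, a disc $\mathbb{D}(a,R)$ carrying a metric of the form $\tfrac{C}{(|z-A|^{2}+B)^{2}}(dx^{2}+dy^{2})$, and introduce the coordinate $\xi=\tau\eta^{-1}(z)$ of the text, where $\eta$ is the Möbius map induced (Lemma \ref{lem:rot_lem}) by the rotation placing this sphere's centre on the positive $x_{3}$-axis and $\tau$ is the subsequent rescaling. In the $\xi$-coordinate the sphere sits in standard position, its metric becomes the round form $\tfrac{4r_{k}^{2}}{(1+|\xi|^{2})^{2}}(dx^{2}+dy^{2})$, and each attached level-$(k+1)$ sphere is analysed exactly as in the base case, producing in $\xi$ a metric of the shape (\ref{eqn:met_form}) and hence a disc together with a factor $\tfrac{C''}{(|\xi-A''|^{2}+B'')^{2}}$. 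Pulling back along the holomorphic map $z\mapsto\xi$ then returns everything to the global coordinate and advances the induction one level.

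The crux — and the step I expect to be the main obstacle — is verifying that the two-parameter family of conformal factors $\tfrac{C}{(|z-A|^{2}+B)^{2}}$ is closed under this operation: that pulling back a factor of this shape along the Möbius map $z\mapsto\xi=\tau\eta^{-1}(z)$ again produces a factor of the same shape, with the positivity $B,C>0$ intact, and that the child's $\xi$-disc maps back to a genuine disc $\mathbb{D}(a',R')\subset\mathbb{D}(a,R)$ with $R'>0$. Conceptually this is nothing more than the fact that $PSL(2,\mathbb{C})$ preserves the round metrics on the Riemann sphere; concretely it is the same completion-of-the-square manipulation used in the base case, now applied one level deeper, while the disc-to-disc claim is precisely the final assertion of Lemma \ref{lem:rot_lem}, positivity being guaranteed by $\tau>0$ and the normalisation $|\alpha|^{2}+|\beta|^{2}=1$. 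Finally I would assemble the cases: each non-base sphere is reached through a nested chain of such discs, so the level-$0$ sphere corresponds to the complement $\mathcal{C}$ of the union of all the discs, while every other sphere corresponds to its own disc with its nested sub-discs removed (equivalently, on overlaps one uses the formula of the smallest disc containing $z$). Collecting the cases then yields exactly the piecewise expression (\ref{eqn:metric_form_gen}).
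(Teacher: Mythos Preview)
Your proposal is correct and follows essentially the same approach as the paper: the theorem is presented there not with a separate proof but as the consolidation of the preceding discussion (``Putting all this together, we record the results of this section as a theorem''), and your write-up simply makes explicit the inductive scheme the paper sketches with the line ``To extend to level-2 spheres and beyond we repeat the process inductively.'' Your identification of the closure of the family $\tfrac{C}{(|z-A|^{2}+B)^{2}}$ under the M\"obius pullback as the crux is apt and matches the paper's completion-of-the-square rewriting; the only phrasing to tighten is that $PSL(2,\mathbb{C})$ preserves the \emph{class} of round metrics rather than any individual one.
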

We have now used the initial data defining the surface (the set of centres and radii $\{(c_{i},r_{i})\}$ and the adjacency matrix $T$) to `unwrap` the surface onto the complex plane $\mathbb{C}$ where its geometry is encoded by the $a_{i},A_{i} \in \mathbb{C}$ and $C_{i},B_{i},R_{i}>0$, each of these quantities being computable from the original data.\\
\\
The form of the metric in the previous theorem will not come as a surprise to a reader who is familiar with Riemannian geometry; the metric locally has the form of the round metric once suitably rescaled and translated. This is exactly the geometric description of a surface formed by intersecting spheres. What the above discussion should serve as is a recipe for computing the quantities $a_{i}$, $A_{i}$, etc. using the original molecular data and the induced maps in $\mathbb{R}^{2}$. We also remark that the metric is not smooth but rather continuous.  This is because of the `corners' formed where the spheres intersect.  Though much of theory discussed in the next section is developed for smooth metrics, this is not a major issue and the Rayleigh Ritz approximation does not require the calculation of any derivatives of the metric.

\section{Approximating the Spectrum of the Laplace--Beltrami Operator: A Mesh-free Approach} \label{sec:3}
\subsection{Background on the Laplace--Beltrami operator and its spectrum}
In this section we frame our discussion in terms of a general manifold $M$; for readers unfamiliar with this notion, one can replace $M$ with the molecular surface $\mathcal{S}$ from the previous section.\\ 
\\
Associated to any smooth Riemannian manifold $(M,g)$ is an operator $\Delta:C^{\infty}(M)\rightarrow C^{\infty}(M)$ known as the \textit{Laplace--Beltrami operator} or the \textit{Laplacian}. There are a number of definitions of the Laplacian; one uses local co-ordinates and is given by
\[
\Delta f :=-\frac{1}{\sqrt{|g|}}\frac{\partial}{\partial x_{i}}\left(\sqrt{|g|}g^{ij}\frac{\partial f}{\partial x_{j}} \right),
\]
where $|g|$ is the determinant of the metric $g$ in the $x$-coordinate system and $g^{ij}$ is the matrix inverse of $g$ in these coordinates.  This formula can be thought of as a generalisation of the `flat' Laplacian of Euclidean space
\[
\Delta_{Euc} := -\sum_{i}\frac{\partial^{2}}{\partial x_{i}^{2}}
\]
to the curved geometry represented by the manifold $(M,g)$. It is natural to look for solutions to the eigenvalue equation 
\begin{equation}\label{eqn:evalue}
    \Delta f =\lambda f,
\end{equation}
for some $\lambda \in \mathbb{R}$ and $f \in C^{\infty}(M)$. A foundational result in Riemannian geometry (see Theorem 1.29 in \cite{RosLM} for example) is that, in the case when the manifold $M$ is compact, the set of eigenvalues of $\Delta$ is a discrete set that accumulates only at infinity.  Put another way, we can order the $\lambda$ as a sequence
\[
0=\lambda_{0}<\lambda_{1}\leq \lambda_{2}\leq \lambda_{3}\leq \ldots
\]
 that for any $\Lambda \in \mathbb{R}$, there are only finitely many $\lambda_{i}\leq \Lambda$.  The set $\{\lambda_{i}\}$ is known as the \textit{spectrum} of the Laplacian $\Delta$.  The field of spectral geometry concerns the question of what geometric information is contained in the set $\{\lambda_{i}\}$.  For example, it is known that two geometrically distinct Riemannian manifolds $(M_{1},g_{1})$ and $(M_{2},g_{2})$ can yield the same spectrum (such metrics are referred to as isospectral)  but such manifolds must share some of the same coarse geometric properties such as dimension and volume. The spectrum can be truncated up to the $k^{th}$ eigenvalue to give a vector
 \[
 v_{\mathcal{S},k}=(\lambda_{1},\lambda_{2},\ldots,\lambda_{k-1},\lambda_{k})
 \]
 describing the shape of the surface \cite{RWP}. \\
 \\
The truncated spectrum $v_{\mathcal{S},k}$ is, in theory, invariant under isometric deformation; it appears therefore that no pre-alignment step is needed to produce an optimum similarity calculation (as is needed for several existing shape descriptors). However, it is usually impossible to determine the spectrum exactly and so an approximation is calculated.  The Rayleigh--Ritz method we describe in the next section will produce a vector that is invariant under rotations and translations of the surface but does depend upon the choice of level-0 sphere (see the discussion in Section \ref{section:conclusion}).  However, for large values of $k$, we expect the smaller eigenvalues $\lambda_{0},\lambda_{1},\ldots,$ etc. to be well approximated and so not as dependent on this choice.\\
\\
At the heart of most approximation schemes is the generation of a mesh - either to compute the Laplacian directly and form a large system of linear equations (e.g. the approach used in \cite{SCPG}) or to provide the points in a cubature scheme for calculating integrals used in the Rayleigh--Ritz method.  We will demonstrate a method that does not need this step by computing such integrals explicitly.\\
\\
\vspace{-30pt}
\subsection{Approximating the Spectrum}
We describe an approach that is often called the \textit{Rayleigh--Ritz} approximation. We select $n$ arbitrary trial functions $f_{1},f_{2},\ldots,f_{n} \in C^{\infty}(M)$ and consider the $n$-dimensional vector space $\mathcal{V}\subset C^{\infty}(M)$ given by their linear span. Let $f$ be an eigenfunction with eigenvalue $\lambda$ and suppose that $f\in \mathcal{V}$. Then
for any test function $f_{j}$ we must have
\begin{equation}\label{eqn:Ray_Rit}
\lambda \langle f,f_{j}\rangle_{L^{2}(M)} = \lambda\int_{M}f\cdot f_{j} \ dV_{g} = \int_{M} (\Delta f) \cdot f_{j} \ dV_{g}=\int_{M} g(\nabla f,\nabla f_{j}) \ dV_{g} = \langle\nabla f, \nabla f_{j}\rangle_{L^{2}}(M),
\end{equation}
where the penultimate inequality follows by the manifold analogue of integration-by-parts. Using the assumption that $f\in \mathcal{V}$ means we can write 
\[
f=\sum_{i=1}^{n}\varepsilon_{i}f_{i},
\]
for coefficients $\varepsilon_{i}\in \mathbb{R}$. Substituting this into Equation (\ref{eqn:Ray_Rit}) yields
\[
\lambda\sum_{i=1}^{n}\varepsilon_{i}\langle f_{i},f_{j}\rangle_{L^{2}(M)} = \sum_{i=1}^{n}\varepsilon_{i}\langle \nabla f_{i},\nabla f_{j} \rangle_{L^{2}(M)}.
\]
This can be written more succinctly by defining the symmetric matrices $\mathcal{A},\mathcal{B}$ by 
\[
\mathcal{A}_{ij} = \langle f_{i},f_{j}\rangle_{L^{2}(M)} \qquad \mathrm{and} \qquad \mathcal{B}_{ij} =  \langle \nabla f_{i},\nabla f_{j}\rangle_{L^{2}(M)}.
\]
We then see that Equation (\ref{eqn:Ray_Rit}) is really equivalent to 
\[
\mathcal{A}^{-1}\mathcal{B}\varepsilon = \lambda \varepsilon
\]
where $\varepsilon =(\varepsilon_{1},\varepsilon_{2},\ldots, \varepsilon_{n})^{t} \in \mathbb{R}^{n}$.  In other words $\lambda$ is an eigenvalue of the matrix $\mathcal{A}^{-1}\mathcal{B}$. Of course we cannot in general expect that $f\in \mathcal{V}$ if we make an arbitrary selection of test functions $f_{1},f_{2},\ldots, f_{n}$ but the $n$ eigenvalues of the symmetric matrix $\mathcal{A}^{-1}\mathcal{B}$ yield the Rayleigh-Ritz approximation to the lowest $n$ eigenvalues of the spectrum.  It can be demonstrated that, provided the set of test functions is complete, the eigenvalues of $\mathcal{A}^{-1}\mathcal{B}$ do `converge' to the spectrum of the Laplacian (see for example Chapter 11 of \cite{Strauss}). 

\subsection{Calculation of Relevant Integrals}\label{subsec: CRI}
We now apply the Rayleigh--Ritz approximation to the case of $(\mathbb{C},\Phi_{ps}^{\ast}(g_{Euc}))$.  The test functions we use are the pre-image under the standard stereographic map $\Phi_{1}:\mathbb{C}\rightarrow \mathbb{S}^{2}$ of the classical spherical harmonics which are the eigenfunctions of the Laplacian of the round metric $\Phi^{\ast}_{1}(g_{Euc})$.  If we define the functions
\[
X(z) = \frac{2\mathrm{Re}(z)}{1+|z|^{2}}, \qquad Y(z)=\frac{2\mathrm{Im}(z)}{1+|z|^{2}}, \qquad \mathrm{and} \qquad Z(z) = \frac{1-|z|^{2}}{1+|z|^{2}}, 
\]
then the first nine spherical harmonics can be written as
\[
f_{0}=1 \textrm{ (with eigenvalue 0 with respect to } \Phi_{1}^{\ast}(g_{Euc})),
\]
\[
f_{1}=X, \quad f_{2}=Y, \quad f_{3}=Z \textrm{ (all with eigenvalue 2)},
\]
\[
f_{4}=X^{2}-Y^{2}, \quad f_{5} = XY, \quad f_{6}=XZ, \quad f_{7}=YZ, \quad f_{8}=3Z^{2}-1,
\]
all with eigenvalue $6$. It is no difficulty to continue this process and write subsequent spherical harmonics in terms of polynomials in the functions $X,Y,$ and $Z$ (see for example \cite{KS}).\\
\\
The calculation of the matrix $\mathcal{B}$ is actually very straightforward as, for surfaces, the quantities $\langle \nabla f_{i}, \nabla f_{j}\rangle_{L^{2}(M)}$ are all conformally invariant (invariant under scalings of the metric $g \rightarrow e^{F}g$ for $F$ a function.).  The form of the metric (\ref{eqn:metric_form_gen}) $\Phi_{ps}^{\ast}(g_{Euc})$ makes it clear that the metric is conformally  equivalent to  $\Phi_{1}^{\ast}(g_{Euc})$ and so we can compute the integrals in  terms of this metric.  Integration-by-parts and classical formulae for integrating polynomials restricted to the sphere e.g.\cite{Fol}) yield the following.
\begin{proposition}
Let $(\mathbb{C},g=\Phi_{ps}^{\ast}(g_{Euc}))$, $f_{i}$ be the functions as defined previously and $\mathcal{B}$ defined by
\[
\mathcal{B}_{ij} = \iint_{\mathbb{C}}g(\nabla f_{i},\nabla f_{j}) dV_{g}.
\]
Then
\[
\mathcal{B}=\mathrm{Diag}\left(0,\frac{8\pi}{3},\frac{8\pi}{3},\frac{8\pi}{3},\frac{32\pi}{5},\frac{8\pi}{5},\frac{8\pi}{5},\frac{8\pi}{5}, \frac{96\pi}{5}\right)
\]
\end{proposition}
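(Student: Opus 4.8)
The plan is to exploit the conformal invariance of the Dirichlet-type pairing $\langle\nabla f_i,\nabla f_j\rangle_{L^2}$ flagged in the discussion preceding the statement. In two dimensions the quantity $\iint g(\nabla f_i,\nabla f_j)\,dV_g$ is unchanged under a conformal rescaling $g\mapsto e^{F}g$: the inverse metric picks up a factor $e^{-F}$ while the area form $dV_g$ picks up a factor $e^{F}$, and these cancel exactly. Since the metric $\Phi_{ps}^{\ast}(g_{Euc})$ of (\ref{eqn:metric_form_gen}) is, piece by piece, a conformal multiple of the round pullback metric $\Phi_{1}^{\ast}(g_{Euc})$, I may replace $g$ by $\Phi_{1}^{\ast}(g_{Euc})$ throughout and equivalently carry out every integral on the standard unit sphere $\mathbb{S}^{2}$, where the $f_i$ are the classical spherical harmonics and $X,Y,Z$ become the coordinate functions $x,y,z$ restricted to $\mathbb{S}^{2}$.

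Next I would integrate by parts on $\mathbb{S}^{2}$. Each $f_i$ is an eigenfunction of the round Laplacian with eigenvalue $\mu_i\in\{0,2,6\}$ (namely $0$ for $f_0$, $2$ for $f_1,f_2,f_3$, and $6$ for $f_4,\ldots,f_8$), so the identity established in (\ref{eqn:Ray_Rit}) gives
\[
\mathcal{B}_{ij}=\iint_{\mathbb{S}^{2}}g(\nabla f_i,\nabla f_j)\,dV=\mu_i\iint_{\mathbb{S}^{2}}f_i f_j\,dV,
\]
reducing the whole computation to the $L^{2}(\mathbb{S}^{2})$ Gram matrix of the nine harmonics. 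The off-diagonal entries vanish: when $f_i,f_j$ lie in distinct degree blocks the two eigenvalues differ, and symmetrising the formula above in $i,j$ forces $\iint f_i f_j\,dV=0$; within each fixed-degree block one checks directly that every pairing $\iint f_i f_j\,dV$ is an integral of a monomial containing an odd power of one of $x,y,z$ (e.g.\ $\iint(x^{2}-y^{2})xy$), or is antisymmetric under the interchange $x\leftrightarrow y$ (e.g.\ $\iint(x^{2}-y^{2})(3z^{2}-1)$), and hence vanishes by symmetry. The entire $f_0$ row is zero in any case because $\mu_0=0$.

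For the diagonal I would apply the classical formula for integrating monomials over $\mathbb{S}^{2}$,
\[
\iint_{\mathbb{S}^{2}}x^{2a}y^{2b}z^{2c}\,dV=4\pi\,\frac{(2a-1)!!\,(2b-1)!!\,(2c-1)!!}{(2a+2b+2c+1)!!},
\]
which yields $\iint x^{2}=\tfrac{4\pi}{3}$, $\iint x^{4}=\tfrac{4\pi}{5}$, and $\iint x^{2}y^{2}=\tfrac{4\pi}{15}$. Combining these with the eigenvalues gives $\mathcal{B}_{11}=2\cdot\tfrac{4\pi}{3}$, then $\mathcal{B}_{44}=6\,\|x^{2}-y^{2}\|_{L^{2}}^{2}=6\cdot\tfrac{16\pi}{15}=\tfrac{32\pi}{5}$, next $\mathcal{B}_{55}=\mathcal{B}_{66}=\mathcal{B}_{77}=6\cdot\tfrac{4\pi}{15}=\tfrac{8\pi}{5}$, and finally $\mathcal{B}_{88}=6\,\|3z^{2}-1\|_{L^{2}}^{2}=6\cdot\tfrac{16\pi}{5}=\tfrac{96\pi}{5}$, which assemble into the stated matrix.

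I expect the only genuine subtlety to be the justification of the integration by parts across the points where the metric (\ref{eqn:metric_form_gen}) is merely continuous rather than smooth. Because the singular set consists of the circles of intersection, which have measure zero, and because the harmonics are globally smooth functions of $z$, the boundary contributions on either side of each circle cancel and no obstruction arises; the equivalent computation on the smooth round sphere, reached via conformal invariance, sidesteps the issue entirely. The remaining work is the bookkeeping of the monomial integrals, which is routine.
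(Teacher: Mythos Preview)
Your approach is correct and is essentially the same as the paper's: the paper likewise invokes conformal invariance of the Dirichlet pairing in dimension two to pass to the round metric $\Phi_{1}^{\ast}(g_{Euc})$, then integrates by parts and appeals to the classical formulae for integrating polynomials over $\mathbb{S}^{2}$. In fact you have supplied more detail than the paper does, including the explicit check of orthogonality within each degree block and the handling of the non-smooth circles, both of which the paper leaves implicit.
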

A more complicated prospect is the calculation of the integrals in the matrix $\mathcal{A}$.  From the form of the metric $\Phi_{ps}^{\ast}(g_{Euc})$ described in Equation (\ref{eqn:metric_form_gen}), and the form of the test functions $f_{i}$, we need to calculate integrals of the form
\begin{equation}\label{eqn:A_integral}
\mathcal{I} = \iint_{\mathbb{D}(a,r)}X^{p}Y^{q}Z^{s}\frac{2C}{(|x+iy-A|^{2}+B)^{2}}\ dx dy,
\end{equation}
where $\mathbb{D}(a,r) \subset \mathbb{C}$ is the closed disc of radius $r$ centred at $a\in \mathbb{C}$, $A\in \mathbb{C}$ and $B,C \in \mathbb{R}_{+}$. We claim that $\mathcal{I}$ has a closed form expressible in these variables (though the expression is extremely complicated). We describe the steps needed to find this expression.\\
\\
\textbf{Step 1: Rotate so that the domain is $\mathbb{D}(0,R)$, that is a disc centred at 0.}\\
\\
We observe that the form of the integrand (\ref{eqn:A_integral}) is preserved under the action of the group $PSU(2)$. Let 
\[
\gamma(z) = \frac{\alpha z+\beta}{-\overline{\beta}z+\alpha},
\]
with $\alpha \in \mathbb{R}$, $\beta \in \mathbb{C}$ and ${|\alpha|^{2}+|\beta|^{2}=1}$ be an element of $PSU(2)$ that maps a disc $\mathbb{D}(0,R)$ about the origin onto $\mathbb{D}(a,r)$ (one can always find such a map using Lemma \ref{lem:rot_lem}).  Straightforward algebra yields
\[
\mathcal{I} = \iint_{\mathbb{D}(0,R)}\gamma^{\ast}(X^{p}Y^{q}Z^{s})\frac{2\tilde{C}}{(|x+iy-\tilde{A}|^{2}+\tilde{B})^{2}} \ dxdy.
\]
Then to compute the terms  $\gamma^{\ast}(X^{p}Y^{q}Z^{s})$ that arise, it is useful to think of the spherical harmonics in the context of representations of the group $PSU(2)$. The following can be proved by tedious algebra and so we omit the proof.
\begin{lemma}
Let $\gamma \in PSU(2)$ be given by
\[
\gamma(z) = \dfrac{\alpha z+\beta}{-\overline{\beta}z+\alpha},
\]
where $\alpha \in \mathbb{R}$, $\beta \in \mathbb{C}$ and ${|\alpha|^{2}+|\beta|^{2}=1}$. Further, let $f_{i}$ be the spherical harmonic functions as given previously. Then
\[
\gamma^{\ast}(f_{0}) =f_{0}(\gamma(z)) = f_{0}. 
\]
Define $F$ in the span ${\langle f_{1},f_{2},f_{3} \rangle}$ by
\[
F(z) = \varepsilon_{1}f_{1}+\varepsilon_{2}f_{2}+\varepsilon_{3}f_{3},
\]
with $v_{F}:=(\varepsilon_{1},\varepsilon_{2},\varepsilon_{3}) \in \mathbb{R}^{3}$. Then
\[
\gamma^{\ast}(F)=F(\gamma(z)) = \mathcal{M}_{1}v_{F},
\]
where
\[
\mathcal{M}_{1} =
\begin{pmatrix}
\mathrm{Re}(\alpha^2-\bar{\beta}^2)&\mathrm{Im}(\alpha^2+\bar{\beta}^2)&-\mathrm{Re}(2\alpha \bar{\beta})\\
-\mathrm{Im}(\alpha^2-\bar{\beta}^2)&\mathrm{Re}(\alpha^2+\bar{\beta}^2)& 2\mathrm{Im}(\alpha \bar{\beta})\\
\alpha\beta+\overline{\alpha\beta}& 2\mathrm{Im}(\alpha\beta) &\alpha^2-|\beta|^2
\end{pmatrix}.
\]
Define $G$ in the span ${\langle f_{4},f_{5},f_{6},f_{7},f_{8} \rangle}$ by
\[
G(z) = \kappa_{4}f_{4}+\kappa_{5}f_{5}+\kappa_{6}f_{6}+\kappa_{7}f_{7}+\kappa_{8}f_{8},
\]
with $w_{G}:=(\kappa_{4},\kappa_{5},\kappa_{6},\kappa_{7},\kappa_{8})\in \mathbb{R}^{5}$. Then
\[
\gamma^{\ast}(G) = G(\gamma(z))=\mathcal{M}_{2}w_{G},
\]
where
\[
\mathcal{M}_{2}=\begin{pmatrix}
\mathrm{Re}(\alpha^4 + \beta^{4})  &    \mathrm{Re}(\beta^{2})\mathrm{Im}(\beta^{2})       & \alpha \mathrm{Re}(\beta)(2\mathrm{Re}(\beta^2)-1) &      \alpha \mathrm{Im}(\beta)(2\mathrm{Re}(\beta^2)+1) &  6\alpha^2\mathrm{Re}(\beta^{2}) \\

 4\mathrm{Re}(\beta^{2})\mathrm{Im}(\beta^{2}) & \mathrm{Re}(\alpha^4 - \beta^{4}) & -M_{32} & -M_{42} & 12\alpha^2\mathrm{Im}(\beta^{2})  \\
M_{31}& M_{32} &  M_{33} & \mathrm{Im}(\beta^{2}) (|\beta|^{2}-3\alpha^{2}) &                          12\mathrm{Re}(\beta)(\alpha- 2\alpha^3) \\
M_{41} &  M_{42} & \mathrm{Im}(\beta^{2}) (|\beta|^{2}-3\alpha^{2}) &  M_{44}  & 12\mathrm{Im}(\beta)(\alpha- 2\alpha^3)\\
2\alpha^2 \mathrm{Re}(\beta^{2}) &  \alpha^2 \mathrm{Im}(\beta^{2})& \alpha \mathrm{Re}(\beta) (\alpha^2- |\beta|^{2}) &  \alpha \mathrm{Im}(\beta) (\alpha^2- |\beta|^{2}) & 1-6|\alpha\beta|^2
\end{pmatrix},
\]

and where
\[
M_{31} = 4 \alpha \mathrm{Re}(\beta)(1-2\mathrm{Re}(\beta^{2})), \qquad
M_{41} = 4 \alpha \mathrm{Im}(\bar{\beta})(1+2\mathrm{Re}(\beta^{2})),
\]
\[
M_{32} = 2 \alpha \mathrm{Im}(\beta) (\alpha^2-2\mathrm{Re}(\beta^{2})-|\beta|^{2}), \qquad
M_{42} = 2 \alpha \mathrm{Re}(\beta)(\alpha^2 +2 \mathrm{Re}(\beta^{2})-|\beta|^{2}),
\]
\[
M_{33} = \alpha^{4}-6\alpha^{2}(\mathrm{Re}(\beta))^{2}+(\mathrm{Re}(\beta))^{4}-(\mathrm{Im}(\beta))^{4},
\quad \mathrm{and} \quad
M_{44} = \alpha^{4}-6\alpha^{2}(\mathrm{Im}(\beta))^{2}+(\mathrm{Im}(\beta))^{4}-(\mathrm{Re}(\beta))^{4}.
\]

\end{lemma}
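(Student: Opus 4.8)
The plan is to exploit the fact that $f_{0},\dots,f_{8}$ are, via the stereographic identification, the restrictions to $\mathbb{S}^{2}$ of the homogeneous harmonic polynomials of degrees $0$, $1$ and $2$ in the ambient Cartesian coordinates, and that the M\"obius maps in $PSU(2)$ act on $\mathbb{S}^{2}$ precisely as rotations. Indeed $(X,Y,Z)$ are exactly the three Cartesian coordinate functions restricted to the unit sphere (one checks $X^{2}+Y^{2}+Z^{2}=1$ directly), $f_{1},f_{2},f_{3}$ are these coordinates themselves, and $f_{4},\dots,f_{8}$ are five linearly independent traceless quadratic monomials in $X,Y,Z$ (note $3Z^{2}-1 = 2Z^{2}-X^{2}-Y^{2}$ on the sphere). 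Each degree-$\ell$ block is therefore an invariant subspace for the $SO(3)$-action carrying the $(2\ell+1)$-dimensional irreducible representation, and $\mathcal{M}_{1},\mathcal{M}_{2}$ are nothing other than the matrices of this action in the chosen bases.

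First I would record the invariance of $f_{0}$: since every $\gamma \in PSU(2)$ fixes the constant function, $\gamma^{\ast}(f_{0})=f_{0}$, which is the first assertion. For the degree-one block the key computational simplification is the conformal-factor identity $1+|\gamma(z)|^{2} = (1+|z|^{2})/|-\overline{\beta}z+\alpha|^{2}$, which follows immediately from $|\alpha|^{2}+|\beta|^{2}=1$. Substituting $\gamma(z)$ into the definitions of $X$, $Y$, $Z$ and clearing this factor turns each of $X(\gamma(z))$, $Y(\gamma(z))$, $Z(\gamma(z))$ into an explicit real-linear combination of $X$, $Y$, $Z$ whose coefficients are quadratic in $\alpha,\beta,\overline{\beta}$; collecting them columnwise produces $\mathcal{M}_{1}$ and establishes $\gamma^{\ast}(F)=\mathcal{M}_{1}v_{F}$. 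This is just the $SU(2)\to SO(3)$ double cover written in these coordinates, and is consistent with the rotation description already used in Lemma \ref{lem:rot_lem}.

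For the degree-two block I would proceed by the symmetric square. Because each $f_{k}$ $(k=4,\dots,8)$ is a quadratic polynomial in $X,Y,Z$ and $\gamma$ acts on $X,Y,Z$ by $\mathcal{M}_{1}$, substitution gives $f_{k}(\gamma(z))$ as a quadratic in $X,Y,Z$; reducing modulo the sphere relation $X^{2}+Y^{2}+Z^{2}=1$ re-expresses it in the basis $\{f_{4},\dots,f_{8}\}$, and the resulting matrix is exactly $\mathrm{Sym}^{2}(\mathcal{M}_{1})$ restricted to the traceless ($\ell=2$) summand, i.e. $\mathcal{M}_{2}$. The hard part is purely the bookkeeping in this last step --- and this is precisely why the authors relegate it to ``tedious algebra'': one must expand the products of the degree-one entries, repeatedly apply $X^{2}+Y^{2}+Z^{2}=1$ to eliminate the trace part, and simplify the twenty-five entries into the quoted closed forms while tracking the $\mathrm{Re}/\mathrm{Im}$ and $\beta$ versus $\overline{\beta}$ conventions. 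There is no conceptual obstacle, only length and sign-discipline; a cleaner alternative that reduces the risk of error is to obtain $\mathcal{M}_{2}$ from $\mathcal{M}_{1}$ representation-theoretically, using the Clebsch--Gordan decomposition $\mathrm{Sym}^{2}(\text{standard representation}) = (\ell=2)\oplus(\ell=0)$, so that the entries of $\mathcal{M}_{2}$ are read off directly from those of $\mathcal{M}_{1}$ via the symmetric-square formula.
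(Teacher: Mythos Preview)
Your proposal is correct and is essentially the same approach as the paper's: the authors state only that ``the following can be proved by tedious algebra and so we omit the proof,'' and in the paragraph immediately after the lemma they give exactly the representation-theoretic interpretation you describe (the spans of $f_{1},f_{2},f_{3}$ and $f_{4},\dots,f_{8}$ as the three- and five-dimensional irreducibles of $PSU(2)$). Your use of the conformal-factor identity to obtain $\mathcal{M}_{1}$ and of $\mathrm{Sym}^{2}(\mathcal{M}_{1})$ restricted to the traceless summand to obtain $\mathcal{M}_{2}$ is a clean way to organise that algebra, but it is the same computation the paper has in mind.
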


The previous Lemma can be understood via representation theory of $PSU(2)$: the span of the constant function $f_{0}$ is the trivial representation; the functions $f_{1},f_{2},f_{3}$ span the 2-eigenspace of the Laplacian associated to the standard round metric $\Phi_{1}^{\ast}(g_{Euc})$ and are the three-dimensional (adjoint) representation;  the functions $f_{4},f_{5},\ldots,f_{8}$ span the $6$-eigenspace of the standard Laplacian and correspond to the irreducible five-dimensional representation of $PSU(2)$. It is clear how this process can be continued to the spherical harmonics associated to larger eigenvalues.\\
\\
\textbf{Step 2: Evaluation of integrals over $\mathbb{D}(0,R)$}\\
\\
We assume here that $A\neq 0$; otherwise the integrals can be calculated easily using standard formulae.
\\
To evaluate an integral of the form
\[
\mathcal{I} = \iint_{\mathbb{D}(0,R)}X^{p}Y^{q}Z^{s}\frac{2C}{(|x+iy-A|^{2}+B)^{2}} \ dx dy.
\]
we convert to polar coordinates where such an integral can be written as
\[
\mathcal{I} = \int_{0}^{R}\frac{r}{(1+r^{2})^{p+q+s}}\int_{0}^{2\pi} \frac{P(r\cos(\theta),r\sin(\theta))}{(r^{2}+|A|^{2}+B-2\mathrm{Re}(A)r\cos(\theta)-2\mathrm{Im}(A)r\sin(\theta))^{2}} d\theta dr
\]
for a two-variable polynomials $P$.  The integral 
\[
\Theta(r) := \int_{0}^{2\pi} \frac{P(r\cos(\theta),r\sin(\theta))}{(r^{2}+|A|^{2}+B-2\mathrm{Re}(A)r\cos(\theta)-2\mathrm{Im}(A)r\sin(\theta))^{2}} d\theta
\]
can be easily computed explicitly using the Residue Theorem method (for example outlined in Chapter 5 of \cite{Ahlfors}). We let $w=e^{i\theta}$ and convert to a contour integral. Hence
\[
\Theta(r) = -i\int_{|w|=1}\frac{P(\frac{r}{2}\left(w+\frac{1}{w} \right), \frac{r}{2i}\left(w-\frac{1}{w} \right))dw}{w(r^{2}+|A|^{2}+B-\mathrm{Re}(A)r\left(w+\frac{1}{w} \right)+i\mathrm{Im}(A)r\left(w-\frac{1}{w}\right))^{2}},
\]
and rearranging yields
\[
\Theta(r) = -i\int_{|w|=1}\frac{w P(\frac{r}{2}\left(w+\frac{1}{w} \right), \frac{r}{2i}\left(w-\frac{1}{w} \right))dw}{((r^{2}+|A|^{2}+B)w-r\overline{A}w^2-rA)^{2}}.
\]
The poles inside the unit circle $|w|=1$ are at the points
\[
w= \dfrac{-(r^{2}+|A|^{2}+B)+\sqrt{(r^{2}+|A|^{2}+B)^{2}-4|A|^{2}r^{2}}}{2(-r\overline{A})},
\]
and, depending upon the precise form of $P$, $w=0$ (it is straightforward to check that the other root of the quadratic denominator lies outside the unit circle). 
The resulting explicit expression in $r$ can then be used in the integral
\[
\mathcal{I} = \int_{0}^{R}\frac{r}{(1+r^{2})^{p+q+s}}\Theta(r)dr.
\]
we have used Mathematica to do this up to all the quartic integrals needed to compute the matrix $\mathcal{A}$ for the nine test functions \cite{wolfram}. The code and data used to produce the results presented in this paper are available on GitHub (\url{https://github.com/RPirie96/RGMolSA}).\\
\\
The calculation of these functions is one-off; once uploaded into the routine, users will just call them when the value of the integral is needed. If we wished to approximate eigenvalues beyond the first nine, it might not be practical to produce such closed-form expressions for these integrals.  In this case, one could use the residue method to produce the radial function $\Theta(r)$ (the expression for this function is relatively simple) and then use a numerical method such as Romberg integration to compute an approximate value.\\
\\
As mentioned already, one deficiency in the method is that for a given choice of base sphere, higher-level spheres are parameterised by ever smaller discs in the complex plane; this renders them higher frequency objects only detectable by eigenvalues further up the spectrum. Additionally, the evaluation of the explicit functions for extremely small values of the input parameters can be dominated by numerical errors. We stabilise our calculation by discounting the contribution of any sphere where any input has numerical value smaller  than $10^{-9}$.  

{ \subsection{Computing the Similarity Between Two Descriptors} \label{section:similarity}

There are many possible methods to compare two vectors. The convention in chemical similarity searching is to choose a measure bounded by 0 and 1, where 0 means two descriptors share no common features, and 1 means they are identical. This choice allows for easy interpretation of similarity scores. In this paper we use the Bray-Curtis distance, originally introduced to compare the presence of a species across two sites, with uses in botany, ecology and environmental science~\cite{Bray_Curtis_1957}. It has subsequently been adopted across a wide range of scientific disciplines, and is useful for our purpose as it is naturally bound by 0 (identical) and 1 (no similarity). We make use of the Bray-Curtis distance implemented in SciPy~\cite{scipy}. For our purpose we define the similarity between two vector descriptors $u$ and $v$ as the inverse Bray-Curtis distance to fit with the previous convention of a similarity score of 1 when two molecules are identical:
\[
d(u,v) = 1 - \frac{\sum |u_{i}-v_{i}|}{\sum |u_{i}+v_{i}|}.
\]
\\
}

\section{Initial Case Study: Phosphodiesterase 5 (PDE5) Inhibitors}
\label{sec:4}
\subsection{A Worked Example of Generating the Shape Descriptor: Sildenafil}
As an illustration of the theory described in the preceding sections, we consider the drug Sildenafil. We describe the space filling (or CPK) model of the molecule by obtaining the atomic centres, adjacency matrix and van der Waals radii for a 3D embedding of the molecule (generated using the ETKDG algorithm \cite{Riniker_Landrum_2015}) from the RDKit cheminformatics package~\cite{landrum}. As shown in Figure~\ref{fig:Sildenafil_remove_ring}, we then replace each of the four rings in the molecule with a single sphere using the process outlined in Section \ref{section:moldata}. We compute the centroid, find the level-0 sphere, which is one of the spheres that replaced a ring, and then label the remaining spheres with the correct level (Figure~\ref{fig:Sildenafil_levels}).
\begin{figure}
\centering\includegraphics[width=5in]{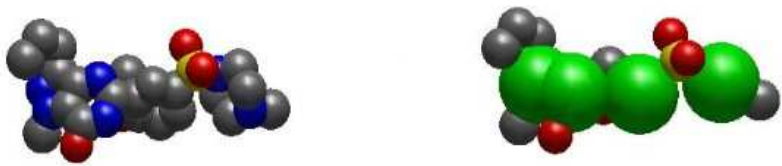}
\caption{(left) The standard CPK representation of the surface associated to Sildenafil (with hydrogen atoms removed). (right) The surface after the rings have been replaced by spheres of the same radius.} \label{fig:Sildenafil_remove_ring}
\end{figure}
\begin{figure}
\centering\includegraphics[width=3in]{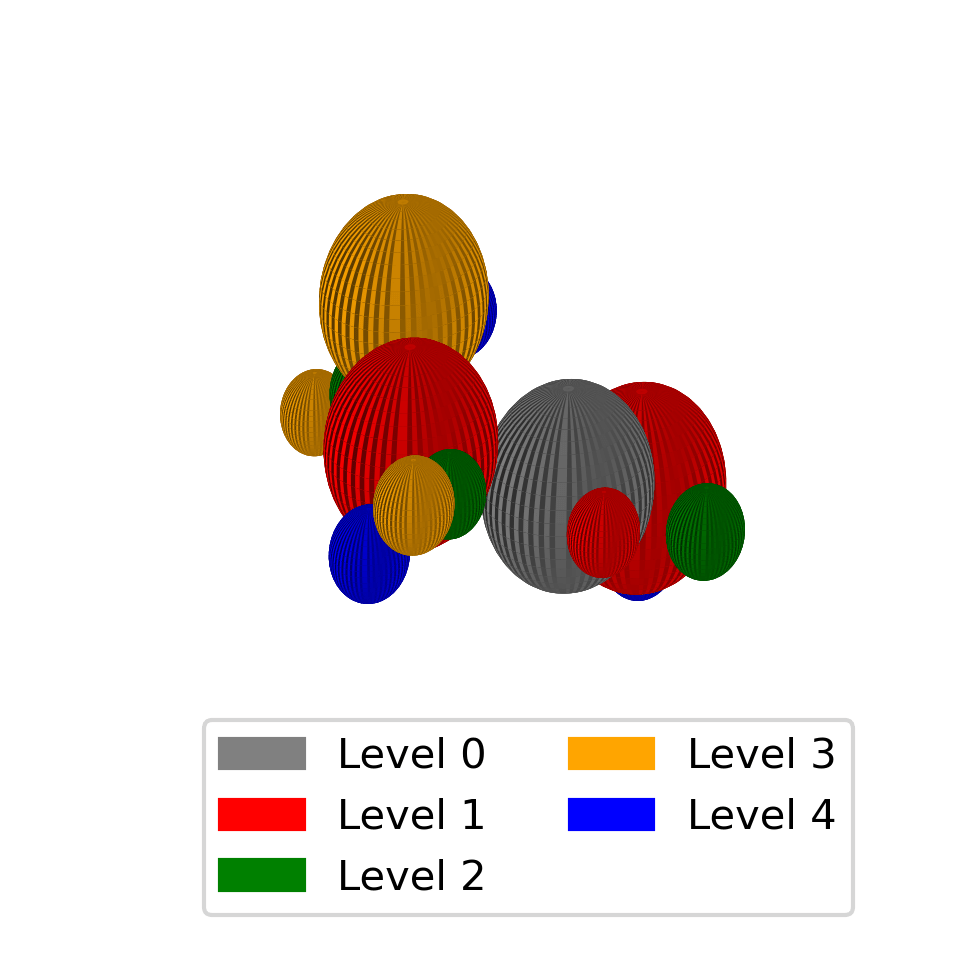}
\caption{Each sphere has a level determined by how many spheres connect it to the grey level-0 sphere} \label{fig:Sildenafil_levels}
\end{figure}
\noindent
We find that the $16$ spheres are distributed across $5$ levels (from $0$ to $4$). The surface area is re-scaled (from the original value of $314.57$~\AA{}$^2$) to $4\pi$, and piecewise stereographic projection is performed.  The resulting centres and radii of the discs in the complex plane are summarised in Table \ref{tab:Sildenafil_data}. Table \ref{tab:Sildenafil_data} shows that the higher-level spheres get `crunched' into discs of very small radii.  For large molecules, such domains will be numerically indistinguishable from being empty. This is a defect caused by using just a single coordinate chart to describe the surface; in the sequel we will address this issue.  It might also be possible to form a shape descriptor by judicious selection of a variety of base spheres and computing the spectrum attached to each of these (see Section \ref{section:conclusion}).

It is possibly more illuminating to view a plot of the discs which are the domains used to construct the piecewise stereographic projection map $\Phi_{ps}$ (and the domains of definition of the metric in Equation \ref{eqn:met_form}), as shown in Figure \ref{fig:Sildenafil_disc}. As we will see in the next section the resulting non-zero eigenvalues of $\mathcal{A}^{-1}\mathcal{B}$ can then be used to construct the shape descriptor, $v_{\mathcal{S}}$.

\begin{table}
\caption{Data describing the induced metric on the molecular surface of Sildenafil. `Sphere' is the numbering of the sphere using data from RDKit and after removing rings; `Level' records the level of the sphere; $A$, $B$ and $C$ are the quantities used to describe the metric in Equation (\ref{eqn:metric_form_gen}).}
\centering
\begin{tabular}{|c|c|c|c|c|c|c|c|}
\hline
 Sphere &  Level &        Centre &  Radius &             A &            B &            C \\
\hline
     1 &      0 &  0.0+0.0j &   $\infty$ & -0.000+0.000j & 1.000 & 0.404 \\
     \hline
      0 &      1 &  0.827+1.129j &   1.272 &  0.375+0.512j & 0.299 & 0.121 \\
      2 &      1 & -1.221-0.487j &   0.492 & -1.112-0.444j & 0.008 & 0.003 \\
      7 &      1 &  0.872-0.676j &   0.378 &  0.803-0.622j & 0.037 & 0.003 \\
      \hline
      6 &      2 &  0.084+0.816j &   0.167 &  0.105+0.794j & 0.008 & $6.3\times 10^{-4}$ \\
      8 &      2 & -1.127-0.395j &   0.013 & -1.127-0.396j & $8.3\times 10^{-6}$ &  $7.7\times 10^{-7}$\\
     12 &      2 & -1.062-0.613j &   0.062 & -1.067-0.595j &  $5.6\times 10^{-4}$ & $3.9\times 10^{-5}$\\
     15 &      2 &  0.784+0.321j &   0.182 &  0.750+0.337j & 0.010 &  $8.4\times 10^{-4}$\\
     \hline
      3 &      3 & -1.129-0.395j &   0.003 & -1.128-0.396j & $2.9\times 10^{-7}$ & $1.2\times 10^{-7}$\\
      5 &      3 &  0.071+0.734j &   0.051 &  0.078+0.747j & $3.3\times 10^{-4}$ & $2.7\times 10^{-5}$ \\
      9 &      3 & -1.125-0.398j &   0.002 & -1.126-0.398j & $3.8\times 10^{-7}$ & $2.7\times 10^{-8}$ \\
     10 &      3 & -1.126-0.394j &   0.002 & -1.126-0.394j & $3.6\times 10^{-7}$ & $2.5\times 10^{-8}$ \\
     13 &      3 & -1.072-0.583j &   0.012 & -1.071-0.585j & $1.9\times 10^{-5}$ & $1.6\times 10^{-6}$ \\
     \hline
      4 &      4 &  0.088+0.733j &   0.012 &  0.086+0.736j & $1.6\times 10^{-5}$ & $1.4\times 10^{-6}$ \\
     11 &      4 & -1.128-0.395j &   0.000 & -1.128-0.395j & $3.2\times 10^{-9}$ & $2.6\times 10^{-10}$ \\
     14 &      4 & -1.069-0.582j &   0.003 & -1.070-0.583j & $9.9\times 10^{-7}$ & $8.2\times 10^{-8}$ \\
\hline
\end{tabular}
\label{tab:Sildenafil_data}
\end{table}

\begin{figure}
\centering\includegraphics[width=3.75in]{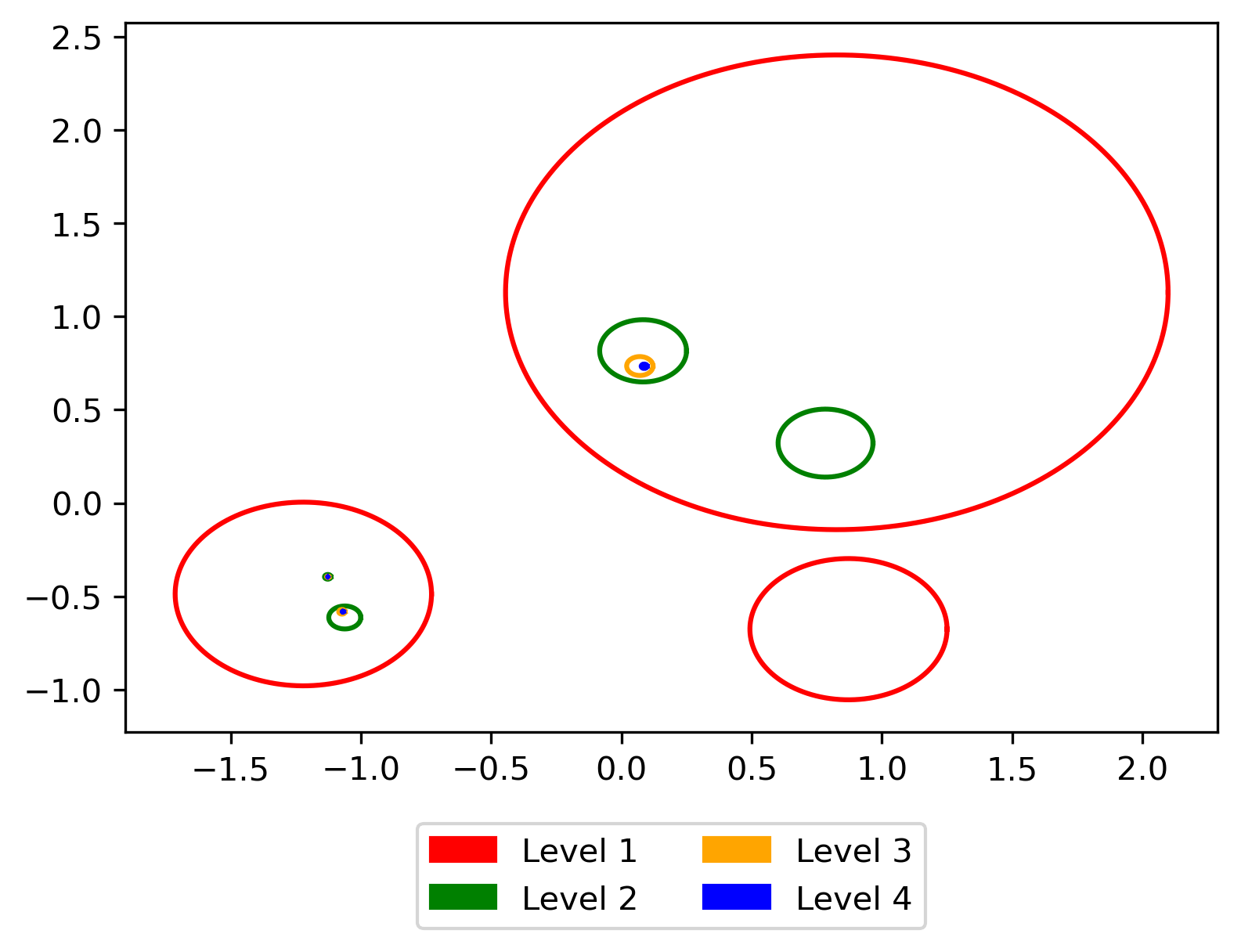}
\caption{The `discs within discs' picture associated to the domains of piecewise stereographic projection for Sildenafil. } \label{fig:Sildenafil_disc}
\end{figure}

{\subsection{Weighting the Surface Area} \label{section:SAweight}

As discussed in Section \ref{section:moldata}, the surface of the molecule must be re-scaled to have an overall area of $4\pi$. We account for this re-scaling by trialling a few methods to include the original un-scaled surface area ($A$, measured in \AA$^{2}$) in our final descriptor.\\
\\
The shape descriptor for Sildenafil, formed by taking the area and the non-zero eigenvalues of $\mathcal{A}^{-1}\mathcal{B}$, and using the un-scaled surface area as the first element of the vector, is given by:
\[
v_{\mathcal{S}} = (314.57, 0.98, 4.08, 8.00, 8.39, 15.64, 19.45, 20.91, 26.64).
\]
As discussed in the Introduction, we would expect high similarity between Sidenafil and Vardenafil (a follow-up drug with very similar chemical structure) and reasonable similarity to Tadalafil (an inhibitor targeting the same pocket, but with a different chemical scaffold). This is indeed what we see in the first column of Table~\ref{tab:SAweight} (labelled $A$), using the un-scaled surface area in the descriptor, with all similarity measures higher than 0.8 and the highest similarity (0.955) between Sidenafil and Vardenafil.\\
\\
However, we were concerned that the surface area may dominate to the similarity score (meaning that only molecules of similar size would be considered similar), and so we also considered replacing the first element of the vector by the inverse area, weighted by a scaling factor ($a$):
\begin{align}
    \lambda_{0} = \frac{a}{A}
\end{align}
When $a=1$, the surface area term contributes negligibly to the similarity measure. Although the similarities between all three molecules drop off, gratifyingly, there is no change in the relative ordering. This pattern continues as $a$ is increased, and in what follows we use $a = 10^4$ as a compromise, which gives $a/A$ of order 1, comparable to the larger eigenvalues in the surface descriptor.  }

\begin{table}
\centering
\caption{The inverse Bray-Curtis distance between a series of PDE5 inhibitors. The first element of the vector descriptor is replaced by either the un-scaled surface area ($A$), or the inverse area (scaled by $a$).}
\begin{tabular}{|c|c|c|c|c|c|c|} 
\hline 
& A & $a = 10^0$ & $a = 10^1$ & $a = 10^2$ & $a = 10^3$ & $a = 10^4$ \\
\hline
Sildenafil-Vardenafil & 0.955 & 0.882 & 0.882 &  0.882 & 0.885 & 0.903 \\
\hline
Sildenafil-Tadalafil & 0.875 & 0.768 & 0.768 &  0.769 & 0.774 & 0.809 \\
\hline
Vardenafil-Tadalafil & 0.833 & 0.667 & 0.667 & 0.668 & 0.675 & 0.725 \\
\hline
\end{tabular}
\label{tab:SAweight}
\end{table}

\subsection{Investigating Variation in 3D Conformers}
Molecules, on the whole, are not rigid entities and can adopt different orientations known as conformers. In order to bind to the protein of interest, and by extension act as a drug, a molecule adopts a conformer that best complements the shape of the binding pocket. It is therefore important to consider how the shape approximation given by our method relates conformers of the same molecule. In theory, these should be more self-similar than they are similar to different molecules. In chemoinformatics, two molecules with a similarity score of $0.7$ or above are typically considered similar.

Here, we consider two small sets of 10 conformers for each our PDE5 inhibitor examples: one set of low energy conformers, which we would expect to have higher similarity, and one set of random conformers, for which we would expect slightly more variance. This is by no means an exhaustive sample, but gives an indication of the general trend. We produced these sets using the ETKDG algorithm \cite{Riniker_Landrum_2015} with energy optimisation using the MMFF94 force field~\cite{tosco_stiefl_landrum_2014}, both implemented in RDKit~\cite{landrum}.
 
Figure \ref{fig:conf_sim} shows the minimum, maximum and average shape similarity, as well as the average root-mean-square deviation (RMSD) for each set. The RMSD is widely used to compare molecular conformers based on their atomic positions, rather than molecular surface. The full set of RMSD and shape similarity comparisons are available in the {\bf Supporting Data}. 

In general these follow the expected trend of high self-similarity, with the exception of the Sildenafil random conformer set, where a slightly lower shape similarity score is obtained. However this does still fall above the standard threshold of 0.7, allowing these to still be classified as similar. This small initial sample indicates that our method shows promising potential in handling molecular conformations, however this will ultimately need verification with a larger sample size.

\begin{figure}
\centering
\includegraphics[width=\textwidth]{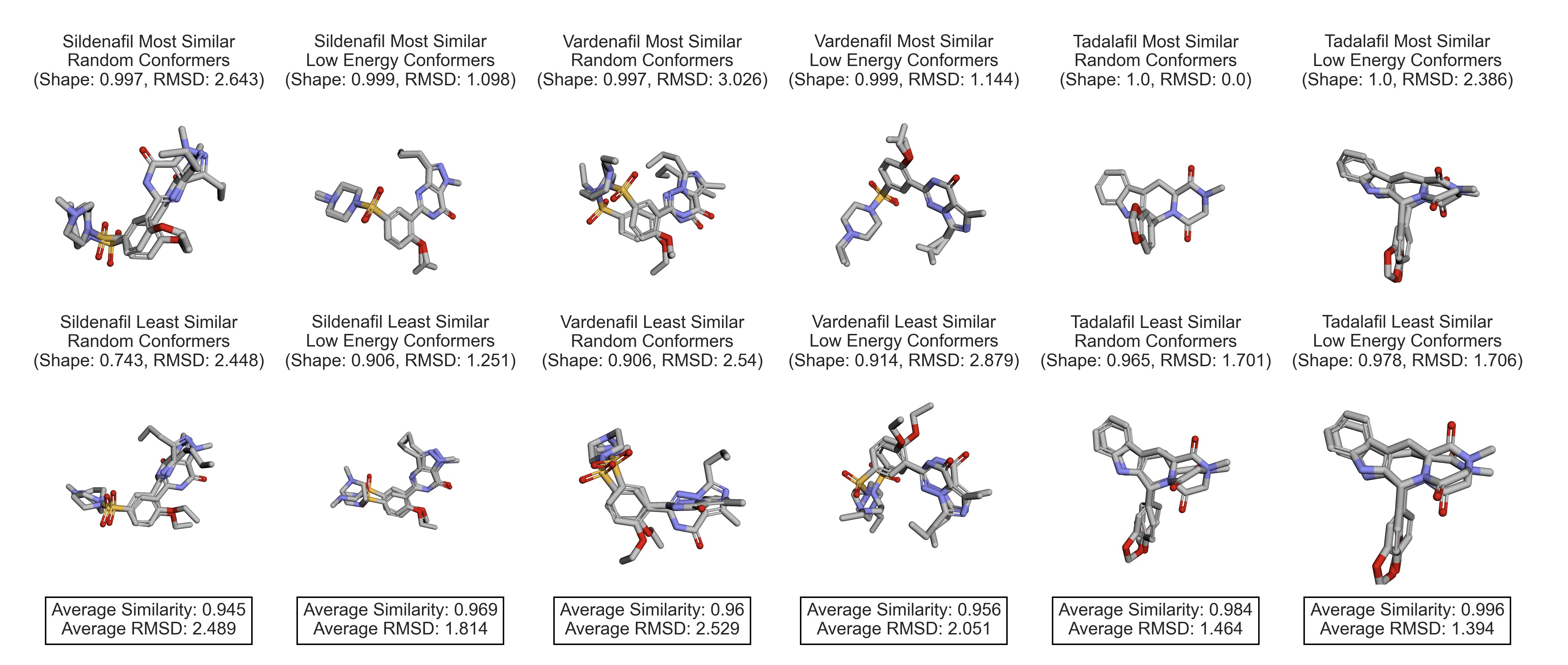}
\caption{ Overlay of the most and least shape-similar conformers of Sildenafil, Vardenafil and Tadalafil and the average shape similarity and RMSD for each set. On average the conformers display a high degree of self-similarity despite the variance in atom-position similarity} \label{fig:conf_sim}
\end{figure}

The swarm plots in Figure~\ref{fig:simvsrms} compare the RMSD and shape similarity scores for each set of comparisons. In general, across all 10 conformers in each set the shape similarity remains consistently high despite the variance in RMSD similarity. This shows that even conformers considered genuinely different are classified as having similar shape, which is consistent with the expectation that the spectrum of the Laplace–Beltrami operator is insensitive to surface deformation. This should have advantages for virtual screening in drug discovery, since i) no pre-alignment step of the molecules is needed and ii) molecules that can potentially deform to fit the binding pocket may be identified as potential hits, even if atomic coordinates based approaches classify them as the wrong shape. 

\begin{figure}%
    \centering
    \subfloat[][\centering RMS Similarity]{\includegraphics[width=7.5cm]{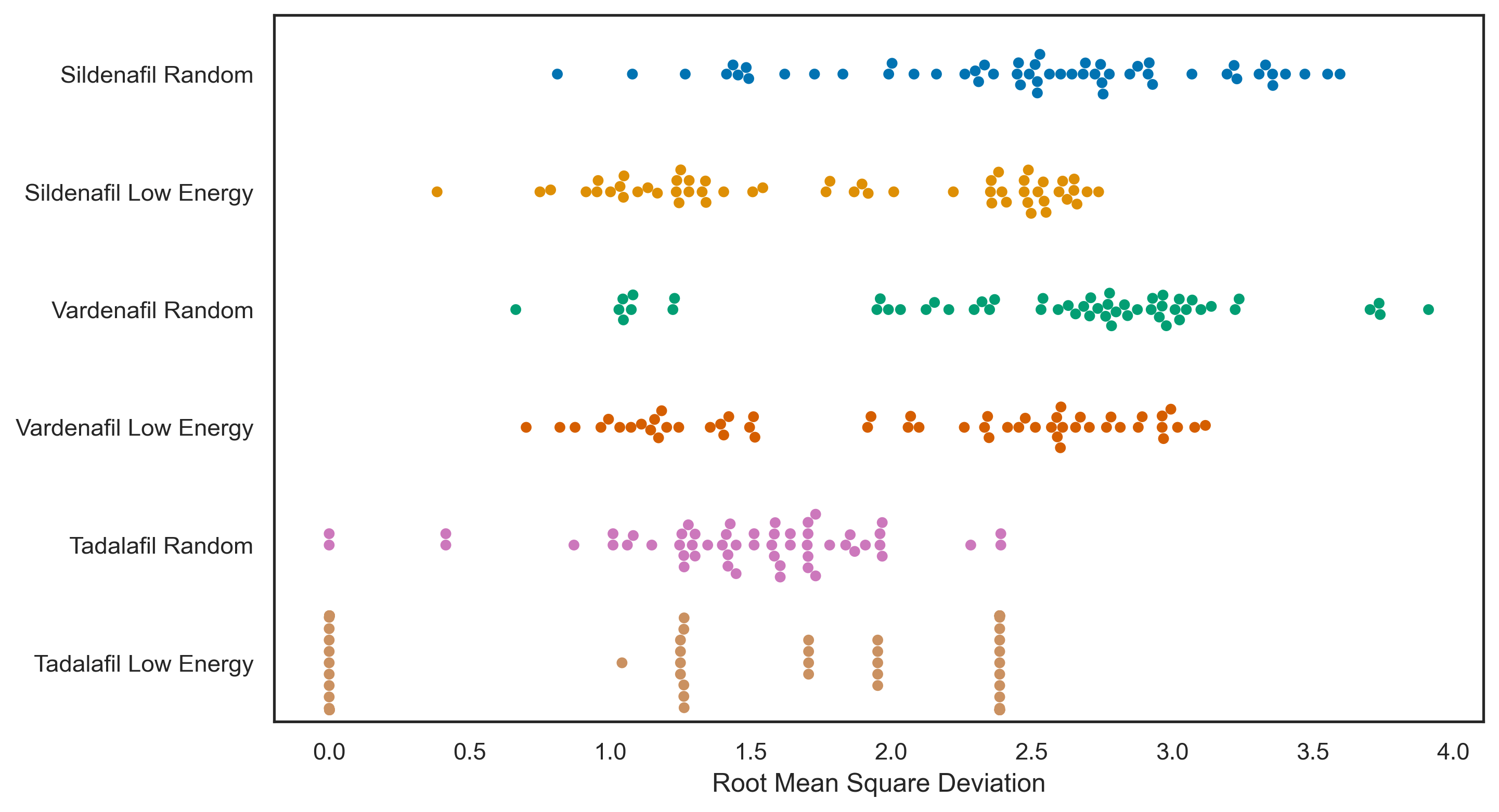} }%
    \qquad
    \subfloat[][\centering Shape Similarity]{\includegraphics[width=7.5cm]{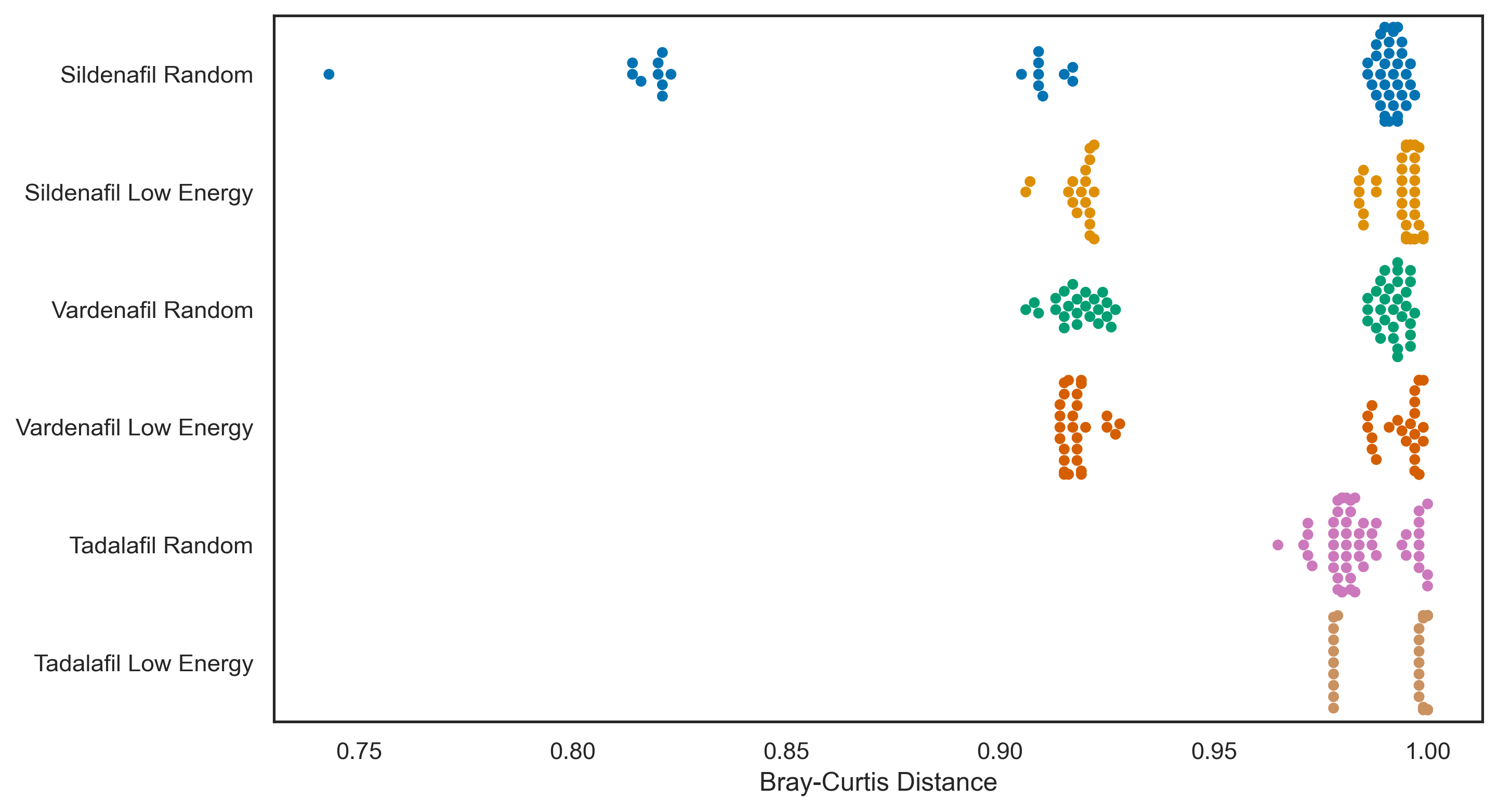}}%
    \caption{Swarm plots of the RMSD (in \AA) and shape similarity for our set of conformers highlight the general trend that different conformers are classed as having similar shape, despite significant variance in their atomic positions. Conformers with RMSD less than 1~\AA{} are considered similar, while those over 3~\AA{} have significant differences. 
    }%
    \label{fig:simvsrms}%
\end{figure}

\subsection{Comparison to Existing Methods}

To gain initial insight into how well our method might compare to the existing software in the field, we carried out a small study on the PDE5 inhibitors shown in Figure \ref{fig:scafhop}. Table~\ref{tab:SVTcomparison} compares our method to the atomic-distance based USRCAT descriptors implemented in RDKit~\cite{Schreyer_Blundell_2012, landrum}, Shape-It (an open-source version of ROCS)~\cite{Taminau_Thijs_De_Winter_2008}, and MolSG, the molecular surface descriptor presented by Seddon \textit{et al.}~\cite{SCPG}. We also include comparison to the 1024-bit Morgan fingerprint, using radius 3, which is a 2D representation of molecules in binary format, implemented in RDKit~\cite{landrum}. In all cases a similarity score is given between 0 (different) and 1 (identical). 

As discussed, we would expect that Sildenafil and Vardenafil should have a similarity score close to 1 for both the shape- and fingerprint-based methods as they are close structural analogues, where only a few small modifications have been made to the same core structure. Given it is known to have a similar volume to Sildenafil, we would anticipate Tadalafil also to score highly by shape, but to be less similar than Sildenafil and Vardenafil. We would expect the 2D method to classify these as different. Here one conformer is considered for each molecule for simplicity. 

Surprisingly USRCAT and Shape-It significantly underestimate the similarity of the Sildenafil analogues compared to our expectations (similarity $<0.4$, Table~\ref{tab:SVTcomparison}). The similarity of Sildenafil and Vardenafil is reasonably high in the Morgan fingerprint approach, but Tadalafil would perhaps not be identified as similar. Using MolSG, Tadalafil is actually identified as more similar to the two closely related analogues, Sildenafil and Vardenafil, than they are to each other. The RGMolSA approach discussed here gives the highest Sildenafil--Vardenafil similarity of the five, and recapitulates the expected ranking, suggesting that it is a promising approach for full-scale similarity searching in drug discovery.
\begin{center}
\begin{table}
\caption{Comparison of our method to existing atomic-distance \cite{Schreyer_Blundell_2012}, atomic-centred \cite{Taminau_Thijs_De_Winter_2008} and molecular surface based \cite{SCPG} descriptors. In all cases the similarity scores given are bound by 0 (no similarity) and 1 (identical).}
\begin{tabular}{|c|c|c|c|c|c|} 
\hline 
 & {\bf RGMolSA (this work)} & {\bf USRCAT} & {\bf Shape-It} & {\bf MolSG} & {\bf Morgan Fingerprint}\\
\hline
Sildenafil-Vardenafil & 0.903 & 0.384 & 0.388 &  0.704 & 0.667 \\
\hline
Sildenafil-Tadalafil & 0.809 & 0.269 & 0.278 &  0.746 & 0.201 \\
\hline
Vardenafil-Tadalafil & 0.725 & 0.291 & 0.353 & 0.887  & 0.209 \\
\hline
\end{tabular}
\label{tab:SVTcomparison}
\end{table}
\end{center}

\subsection{Similarity to Potential Decoys}

\begin{figure}
\centering
\includegraphics[width=4in]{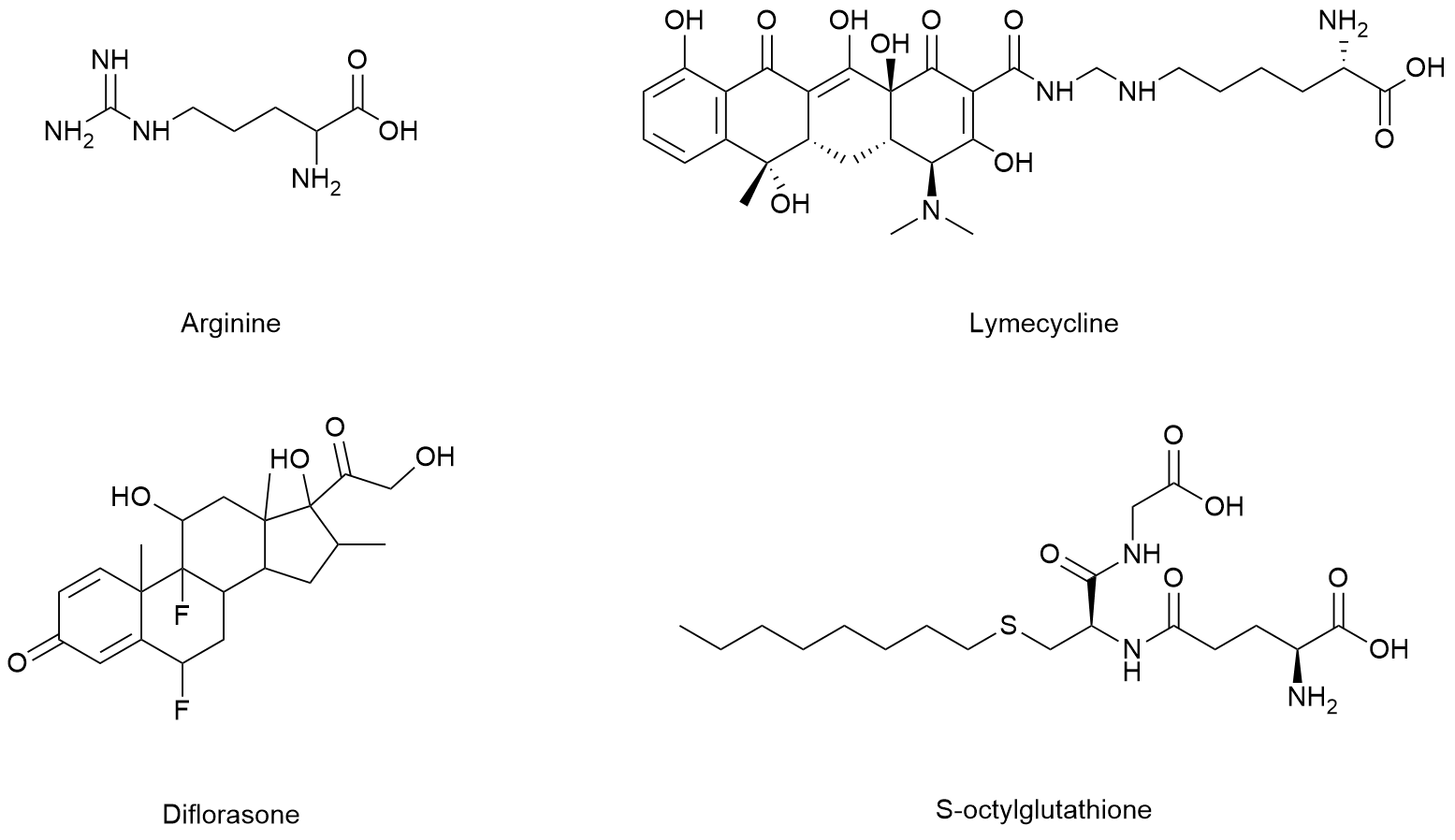}
\caption{Control case molecules } \label{fig:others}
\end{figure}
%
{The above comparison of PDE5 inhibitor molecules using the RGMolSA method gives generally high similarity scores. To demonstrate that our method does not simply classify all molecules as having similar shape, we present an additional comparison of these to four other molecules (Figure~\ref{fig:others}). Arginine was selected due to its lower molecular weight, and therefore smaller size, but same general shape (a long chain of spheres). Lymecycline has a higher molecular weight, but the four-ring motif potentially gives part of the molecule a similar shape to Sildenafil (especially if one of the ring atoms is used as the base sphere). Diflorasone has a similar molecular weight and four rings, but has a different therapeutic target/indication. S-octylglutathione again has similar molecular weight, but no rings and the potential for similarity due to the branching in the centre of the molecule. 
}

{ The results of this comparison are presented in Figure~\ref{fig:othersoverlay}. Generally the similarity scores range between 0.3--0.7, so these would unlikely be identified as potential hits in a virtual screen. Only three out of the twelve comparisons would be classed as `similar' (i.e. a similarity score $\geq 0.7$): Vardenafil with Arginine, Sildenafil with Diflorasone and Tadalafil with Diflorasone. The general shape of Sildenafil, Tadalafil and Diflorasone is similar: all contain 4 rings and have similar surface area. High similarity between Tadalafil and Diflorasone in particular would be expected as both have 4 fused rings in their structure. Looking at the structures of Vardenafil and Arginine, despite one containing rings while the other does not, the general shape of chain of spheres (whether those are rings or atoms) is common between the two. This suggests the potential capability for scaffold hopping within our method. 
}
\begin{figure}
\centering
\includegraphics[width=6.5in]{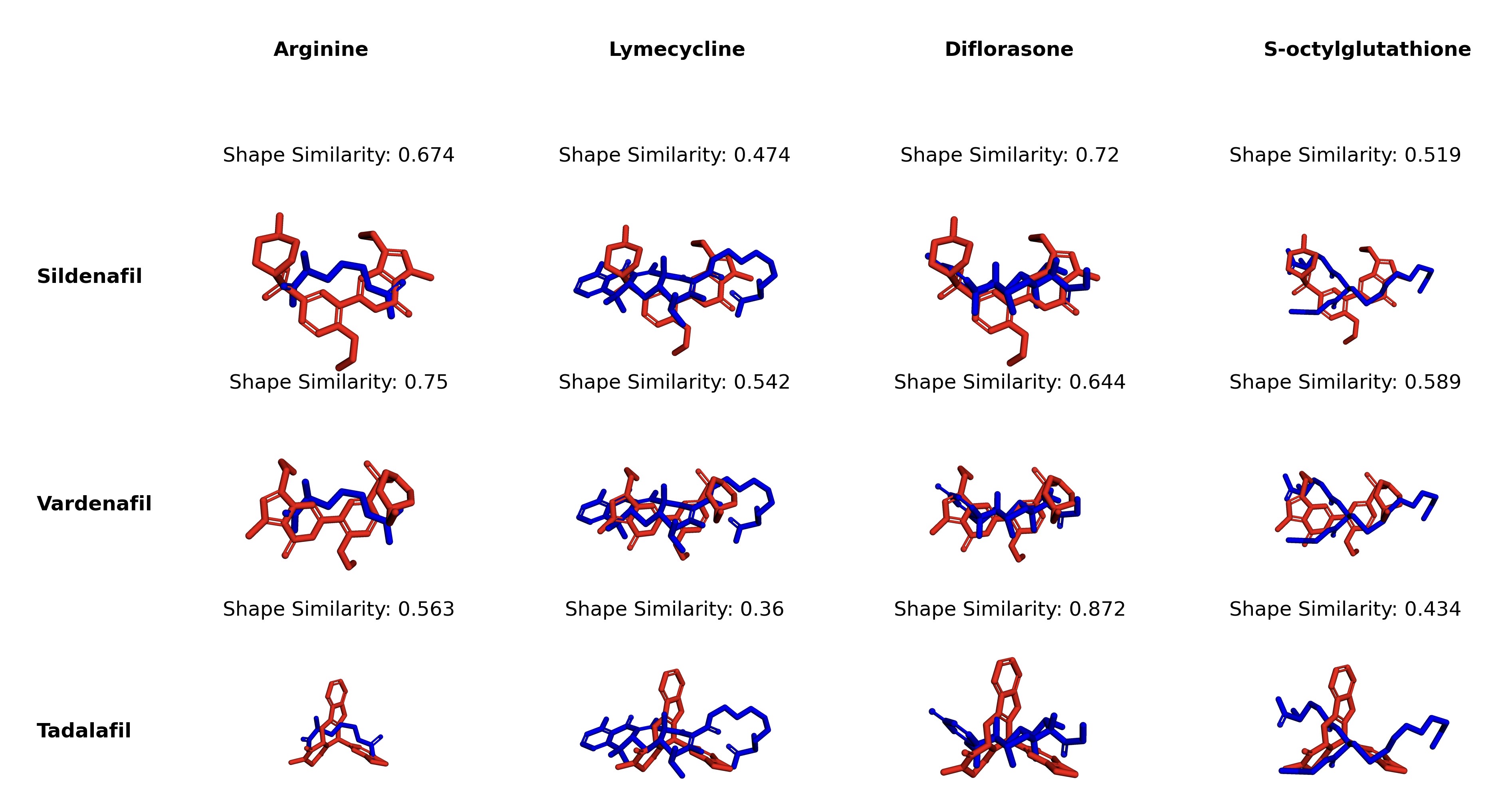}
\caption{Inverse Bray-Curtis similarity of four `different' molecules (blue) to the PDE5 inhibitor test series (red). The overlay of the structures was computed using Open3DAlign \cite{tosco_balle_shiri_2011}}. \label{fig:othersoverlay}
\end{figure}

\section{Conclusion}\label{section:conclusion}
We have outlined the theory underpinning a new molecular shape descriptor, which approximates the surface with a simple vector
\[
v_{\mathcal{S}} =(10^4/A,\lambda_{1},\lambda_{2},\ldots, \lambda_{8})\in \mathbb{R}^{9}.
\]
where $A$ is the surface area of the molecule (in \AA{}$^2$).
The descriptor is derived from an an explicit description of the Riemannian metric $g$ associated with the molecular surface $\mathcal{S}$ (after replacement of rings with spheres) constructed by considering the atomic centres, radii, and the adjacency matrix associated with the molecule. The form of the metric is given in Theorem 2.1 where we have used piecewise stereographic projection to produce a `discs within discs' picture of the surface (Figure \ref{fig:Sildenafil_disc}). From this concrete description we have demonstrated that it is possible to compute a Rayleigh-Ritz approximation to the Laplacian explicitly and without producing a mesh to approximate the surface using $v_{\mathcal{S}}$. The similarity between two descriptors is computed using the inverse Bray-Curtis distance, giving a score bound by 0 (completely different) and 1 (identical).\\
\\
The capabilities of our method were investigated using a series of PDE5 inhibitors known to have similar shape to each other - Sildenafil, Vardenafil and Tadalafil. The similarity between conformers of the same molecule is generally handled well, with scores above 0.8 in most cases. This matches the expectation that different conformers of the same molecule will generally be more self-similar than they are similar to other molecules. An initial comparison to other existing 3D shape similarity methods and a 2D based molecular fingerprint revealed our approach to be promising at quantifying similarity, outperforming all of the existing methods. A full retrospective benchmarking study will be required to verify this capability. An additional comparison of the PDE5 inhibitor examples to a set of potential decoys, which in theory would not be active against PDE5, was completed to prove that not all molecules are classed as similar. In most cases these had scores between 0.3--0.7, and would be classed as inactive. Vardenafil and Arginine, Sildenafil and Diflorasone and Tadalafil and Diflorasone each gave scores above 0.7, with the similarity between Vardenafil and Arginine in particular evidencing potential for scaffold hopping within our new method.

As a shape descriptor, $v_{\mathcal{S}}$ (and the method used to produce it) has some clear problems. While the descriptor $v_{\mathcal{S}}$ is alignment-free in the sense it is invariant under uniform rotations and translations of the initial molecular data, it depends upon a choice of base sphere (which determines the test functions for the Rayleigh-Ritz method). For large molecules, this has the effect of reducing the contribution to the approximation of atoms that are far from the base (in the sense of a path in the 2-D graph of the molecule). Even worse, if the molecule is large enough, the disc within disc data becomes so small as to cause numerical errors to accumulate in the approximation. For now these errors are handled by ignoring any contributions from regions with radii less than $10^{-9}$, however there are other more robust remedies to this problem:
\begin{enumerate}
    \item RGMolSA only computes the spectrum for a single base sphere, nearest to the centroid. We could pick other natural atoms in the molecule from which to compute the spectrum (similar to the approach taken by USRCAT \cite{Schreyer_Blundell_2012}), then combine these to give a better overall description of the surface.
    \item We could compute a more eigenvalues; this might not be possible to do in a closed form due to the complexity of the computer algebra needed to produce the formulae, but it will be possible to do quickly and accurately using the Romberg method to approximate the radial integral.
    \item We could pick a different set of test functions; it might be natural to use, in the notation of Subsection \ref{subsec: CRI}, $X(z),Y(z),Z(z)$ where we use the relevant elements of $PSL(2,\mathbb{C})$ to produce a coordinate $z$ for every single sphere in the surface.  The corresponding integrals would be slightly more complicated to evaluate but a very similar method should work (e.g. residue calculus). The resulting descriptor would not be biased in favour of any particular base sphere.
\end{enumerate}

The shortcomings of RGMolSA will also be addressed in our subsequent work using the Riemannian metric in the theory of K\"ahler quantisation to produce a novel shape descriptor, which approximates the surface shape at a global level. Both descriptors could also be further improved by introducing consideration of pharmacophoric features (e.g. aromatic rings, hydrogen bond donors and acceptors). For example, our method identifies Vardenafil and Arginine as having high shape similarity; based on this alone Arginine would be selected as a potential PDE5 inhibitor. If these two molecules have poor pharmacophoric overlap, Arginine may not actually bind, so consideration of these features will allow for more rounded predictions of activity.

\section{Acknowledgements}
The authors acknowledge support from an EPSRC Doctoral Training Partnership studentship (grant EP/R51309X/1), the Alan Turing Institute Enrichment Scheme (R.P.), and a UKRI Future Leaders Fellowship (grant MR/T019654/1) (D.J.C.). S.J.H. would like to thank Prof. T. Murphy and Dr R. L. Hall for their interest and for useful conversations about the project. We thank Dr A. Asaad for useful comments on a draft of the article.

\bibliographystyle{unsrt} 
\bibliography{references}

\end{document}